\newcommand{\mathbbm}[1]{\mathds{#1}}
\begin{document}
\begin{frontmatter} 
\title{Feedback Control of Switched Stochastic Systems Using \\   Randomly Available Active Mode Information \thanksref{footnoteinfo}} \thanks[footnoteinfo]{This research was supported in part by JSPS Grant-in-Aid for Scientific Research (A) 26249062 and (C) 25420431, the Aihara Innovative Mathematical Modelling Project (JSPS) under FIRST program initiated by CSTP, and Japan Science and Technology Agency under CREST program. The material in this paper was partially presented at the 52nd IEEE Conference on Decision and Control, 2013, Firenze, Italy.}
\author[titech]{Ahmet Cetinkaya} \ead{ahmet@dsl.mei.titech.ac.jp},
\author[titech]{Tomohisa Hayakawa \thanksref{telfax}} \ead{hayakawa@mei.titech.ac.jp} \thanks[telfax]{Tel. : +81 3 5734 2762; Fax: +81 3 5734 2762 } 
\address[titech]{Department of Mechanical and Environmental Informatics, Tokyo Institute of Technology, Tokyo 152-8552, Japan}
\begin{keyword} Switched stochastic systems; almost sure stabilization; random mode observations; missing mode observations; countable-state Markov processes; renewal processes \end{keyword} 

\begin{abstract} 

Almost sure asymptotic stabilization of a discrete-time switched stochastic
system is investigated. Information on the active operation mode of
the switched system is assumed to be available for control purposes
only at random time instants. We propose a stabilizing feedback control
framework that utilizes the information obtained through mode observations.
We first consider the case where stochastic properties of mode observation
instants are fully known. We obtain sufficient asymptotic stabilization
conditions for the closed-loop switched stochastic system under our
proposed control law. We then explore the case where exact knowledge
of the stochastic properties of mode observation instants is not available.
We present a set of alternative stabilization conditions for this
case. The results for both cases are predicated on the analysis of
a sequence-valued process that encapsulates the stochastic nature
of the evolution of active operation mode between mode observation
instants. Finally, we demonstrate the efficacy of our results with
numerical examples. 

\end{abstract} 
\end{frontmatter}

\section{Introduction}

The framework developed for switched stochastic systems provides accurate
characterization of numerous complex real life processes from physics
and engineering fields that are subject to randomly occurring incidents
such as sudden environmental variations or sharp dynamical changes
\cite{cassandras2006,yinzhu2010}. Stabilization problem for switched
stochastic systems has been investigated in many studies (e.g., \citeasnoun{ghaoui1996}, \citeasnoun{farias2000}, \citeasnoun{fang2002}, \citeasnoun{costa2004discrete}, \citeasnoun{sathananantan2008}, \citeasnoun{geromel2009}
and the references therein). 

Control frameworks developed for switched stochastic systems often
require the availability of information on the active operation mode
at all times. Note that for numerous applications the active mode
describes the operating conditions of a physical process and is driven
by external incidents of stochastic nature. The active mode, hence,
may not be directly measurable and it may not be available for control
purposes at all time instants during the course of operation. When
the controller does not have access to any mode information, for achieving
stabilization one can resort to adaptive control frameworks \cite{toussi1991,caines1992,bercu2009}
or mode-independent control laws \cite{vargas2006,boukasautomatica2006}.
On the other hand, if mode information can be observed at certain
time instants (even if rarely), this information can be utilized in
the control framework. In our earlier work \cite{cetinkayaacc2012,cetinkaya2013a},
we investigated stabilization of switched stochastic systems for the
case where only \emph{sampled} mode information is available for control
purposes. Under the assumption that the active mode is \emph{periodically}
observed, we proposed a stabilizing feedback control framework that
utilizes the available mode information. 

In practical applications, it would be ideal if the mode information
of a switched system is available for control purposes at all time
instants or at least periodically. However, there are cases where
mode information is obtained at \emph{random} time instants. This
situation occurs for example when the mode is sampled at all time
instants; however, some of the mode samples are randomly lost during
communication between mode sampling mechanism and the controller.
On the other hand, in some applications, the mode has to be detected,
but the detected mode information may not always be accurate. In this
case each mode detection has a confidence level. Mode information
with low confidence is discarded. As a result, depending on the confidence
level of detection, the controller may or may not receive the mode
information at a particular mode detection instant. In addition, we
may also take advantage of random sampling for certain cases and observe
the mode intentionally at random instants, as for such cases control
under random sampling provides better results compared to periodic
sampling. Note that random sampling has also been used for problems
such as signal reconstruction and has been shown to have advantages
over regular periodic sampling (see \citeasnoun{boyle2007detecting}, \citeasnoun{carlen2009signal}).

In this paper our goal is to explore the feedback stabilization problem
for the case where the active operation mode, which is modeled as
a finite-state Markov chain, is observed at \emph{random} time instants.
We provide an extended discussion based on our preliminary report
\cite{cetinkaya2013cdc}. Specifically, we assume that the length
of intervals between consecutive mode observation instants are identically
distributed independent random variables. We employ a renewal process
to characterize the occurrences of random mode observations. This
characterization allows us to also explore periodic mode observations
\cite{cetinkayaacc2012,cetinkaya2013a} as a special case. 

We propose a linear feedback control law with a piecewise-constant
gain matrix that is switched depending on the value of a randomly
sampled version of the mode signal. In order to investigate the evolution
of the active mode together with its randomly sampled version, we
construct a stochastic process that represents sequences of values
the mode takes between random mode observation instants. This sequence-valued
stochastic process turns out to be a countable-state Markov chain
defined over a set that is composed of all possible mode sequences
of finite length. We first analyze the probabilistic dynamics of this
sequence-valued Markov chain. Then based on our analysis, we obtain
sufficient stabilization conditions for the closed-loop switched stochastic
system under our proposed control framework. These stabilization conditions
let us assess whether the closed-loop system is stable for a given
probability distribution for the length of intervals between consecutive
mode observation instants. As this probability distribution is not
assumed to have a certain structure, the result presented in this
paper can also be considered as a generalization of the result provided
in \citeasnoun{cetinkaya2011}, where stabilization problem is discussed
in continuous time and the random intervals between mode sampling
instants are specifically assumed to be exponentially distributed.
In this paper we also explore the case where perfect information regarding
the probability distribution for the length of intervals between consecutive
mode observation instants is not available. For this problem setting,
we present alternative sufficient stabilization conditions which can
be used for verifying stability even if the distribution is not exactly
known. 

The paper is organized as follows. We provide the notation and a review
of key results concerning renewal processes in Section~\ref{sec:Mathematical-Preliminaries}.
In Section~\ref{sec:SwitchedStochasticSection3}, we propose our
feedback control framework for stabilizing discrete-time switched
stochastic systems under randomly available mode information. Then
in Section~\ref{sec:Sufficient-Conditions-for}, we present sufficient
conditions under which our proposed control law guarantees almost
sure asymptotic stabilization. In Section~\ref{sec:Illustrative-Numerical-Example},
we demonstrate the efficacy of our results with two illustrative numerical
examples. Finally, in Section~\ref{sec:Conclusion} we conclude our
paper.

\section{Mathematical Preliminaries\label{sec:Mathematical-Preliminaries}}

In this section, we provide notation and several definitions concerning
discrete-time stochastic processes. Specifically, we denote positive
and nonnegative integers by $\mathbb{N}$ and $\mathbb{N}_{0}$, respectively.
Moreover, $\mathbb{R}$ denotes the set of real numbers, $\mathbb{R}^{n}$
denotes the set of $n\times1$ real column vectors, and $\mathbb{R}^{n\times m}$
denotes the set of $n\times m$ real matrices. We write $(\cdot)^{\mathrm{T}}$
for transpose, $\|\cdot\|$ for the Euclidean vector norm. We use
$\lambda_{\min}(H)$ (resp., $\lambda_{\max}(H)$) for the minimum
(resp., maximum) eigenvalue of the Hermitian matrix $H$. A function
$V:\mathbb{R}^{n}\rightarrow\mathbb{R}$ is called positive definite
if $V(x)>0,\, x\neq0$, and $V(0)=0$. We represent a finite-length
sequence of ordered elements $q_{1},q_{2},\ldots,q_{n}$ by $q=(q_{1},q_{2},\ldots,q_{n})$.
The length (number of elements) of the sequence $q$ is denoted by
$|q|$. The notations $\mathrm{\mathbb{P}}[\cdot]$ and $\mathbb{E}[\cdot]$
respectively denote the probability and expectation on a probability
space $(\Omega,\mathcal{F},\mathbb{P})$ with filtration $\{\mathcal{F}_{k}\}_{k\in\mathbb{N}_{0}}$.
Furthermore, we write $\mathbbm{1}_{[G]}:\Omega\to\{0,1\}$ for the
indicator of the set $G\in\mathcal{F}$, that is, $\mathbbm{1}_{[G]}(\omega)=1$,
$\omega\in G$, and $\mathbbm{1}_{[G]}(\omega)=0$, $\omega\notin G$.

\subsection{Discrete-Time Renewal Processes\label{sub:Discrete-Time-Renewal-Processes}}

A discrete-time renewal process $\{N(k)\in\mathbb{N}_{0}\}_{k\in\mathbb{N}_{0}}$
with initial value $N(0)=0$ is an $\mathcal{F}_{k}$-adapted stochastic
counting process defined by $N(k)\triangleq\sum_{i\in\mathbb{N}}\mathbbm{1}_{[t_{i}\leq k]},$
where $t_{i}\in\mathbb{N}_{0}$, $i\in\mathbb{N}_{0}$, are random
time instants such that $t_{0}=0$ and $\tau_{i}\triangleq t_{i}-t_{i-1}\in\mathbb{N}$,
$i\in\mathbb{N}$, are identically distributed independent random
variables with finite expectation (i.e., $\mathbb{E}[\tau_{i}]<\infty$,
$i\in\mathbb{N}$). Note that $\tau_{i}$, $i\in\mathbb{N}$, denote
the lengths of intervals between time instants $t_{i}$, $i\in\mathbb{N}_{0}$.
Furthermore, we use $\mu:\mathbb{N}\to[0,1]$ to denote the common
distribution of the random variables $\tau_{i}$, $i\in\mathbb{N}$,
such that 
\begin{align}
\mathbb{P}[\tau_{i}=\tau] & =\mu_{\tau},\quad\tau\in\mathbb{N},\quad i\in\mathbb{N},
\end{align}
where $\mu_{\tau}\in[0,1]$. Note that $\sum_{\tau\in\mathbb{N}}\mu_{\tau}=1$.
Now, let $\hat{\tau}\triangleq\sum_{\tau\in\mathbb{N}}\tau\mu_{\tau}=\mathbb{E}[\tau_{1}]$($=\mathbb{E}[\tau_{i}]$,
$i\in\mathbb{N}$). It follows as a consequence of strong law of large
numbers for renewal processes (see \citeasnoun{serfozo2009}) that
$\lim_{k\to\infty}\frac{N(k)}{k}=\frac{1}{\hat{\tau}}$. 

Note that in Section~\ref{sec:SwitchedStochasticSection3}, we employ
a renewal process to characterize the occurrences of random mode observations.

\subsection{Almost Sure Asymptotic Stability}

The zero solution $x(k)\equiv0$ of a stochastic system is \emph{almost
surely stable} if, for all $\epsilon>0$ and $\rho>0$, there exists
$\delta=\delta(\epsilon,\rho)>0$ such that if $\|x(0)\|<\delta$,
then 
\begin{align}
\mathbb{P}[\sup_{k\in\mathbb{N}_{0}}\|x(k)\|>\epsilon] & <\rho.
\end{align}
 Furthermore, the zero solution $x(k)\equiv0$ of a stochastic system
is \emph{asymptotically stable almost surely} if it is almost surely
stable and 
\begin{align}
\mathbb{P}[\lim_{k\to\infty}\|x(k)\|=0] & =1.\label{eq:definition-convergence}
\end{align}
In Sections~\ref{sec:SwitchedStochasticSection3} and \ref{sec:Sufficient-Conditions-for},
we investigate almost sure asymptotic stabilization of a switched
stochastic system.

\section{Stabilizing Switched Stochastic Systems with Randomly Available Mode
Information \label{sec:SwitchedStochasticSection3}}

In this section, we propose a feedback control framework for stabilizing
a switched stochastic system by using only the randomly available
mode information. Specifically, we consider the discrete-time switched
linear stochastic system with $M\in\mathbb{N}$ number of modes given
by 
\begin{equation}
x(k+1)=A_{r(k)}x(k)+B_{r(k)}u(k),\quad k\in\mathbb{N}_{0},\label{eq:control-system}
\end{equation}
 with the initial conditions $x(0)=x_{0}$, $r(0)=r_{0}\in\mathcal{M}\triangleq\{1,2,\ldots,M\}$,
where $x(k)\in\mathbb{R}^{n}$ and $u(k)\in\mathbb{R}^{m}$ respectively
denote the state vector and the control input; furthermore, $A_{i}\in\mathbb{R}^{n\times n},\, B_{i}\in\mathbb{R}^{n\times m},\, i\in\mathcal{M}$,
are the subsystem matrices. The mode signal $\{r(k)\in\mathcal{M}\}_{k\in\mathbb{N}_{0}}$
is assumed to be an $\mathcal{F}_{k}$-adapted, $M$-state discrete-time
Markov chain with the initial distribution denoted by $\nu:\mathcal{M}\to[0,1]$
such that $\nu_{r_{0}}=1$ and $\nu_{i}=0$, $i\neq r_{0}$. 

We use the matrix $P\in\mathbb{R}^{M\times M}$ to characterize probability
of transitions between the modes of the switched system. Specifically,
$p_{i,j}\in[0,1]$, which is the $(i,j)$th entry of the matrix $P$,
denotes the probability of a transition from mode $i$ to mode $j$.
Note that $\sum_{j\in\mathcal{M}}p_{i,j}=1$, $i\in\mathcal{M}$.
Furthermore, we use $p_{i,j}^{(l)}$ to denote $(i,j)$th entry of
the matrix $P^{l}$. Note that $p_{i,j}^{(l)}\in[0,1]$ is in fact
the $l$-step transition probability from mode $i$ to mode $j$,
that is, 
\begin{align}
p_{i,j}^{(l)} & \triangleq\mathbb{P}[r(k+l)=j|r(k)=i],\,\, l\in\mathbb{N}_{0},\,\, i,j\in\mathcal{M},\label{eq:pton}
\end{align}
with $p_{i,i}^{(0)}=1$, $i\in\mathcal{M}$, $p_{i,j}^{(0)}=0$, $i\neq j$.
Furthermore, $p_{i,j}^{(1)}=p_{i,j}$, $i,j\in\mathcal{M}$. The mode
signal can be represented using a transition diagram, which shows
possible transitions between the operation modes of the switched system.
Mode transition diagram for a switched system with two modes is shown
in Figure~\ref{Flo:modetransdiag}. 

In this paper, we assume that the mode signal is an aperiodic, irreducible
Markov chain and has the invariant distribution $\pi:\mathcal{M}\to[0,1]$. 

\begin{figure}[t]
\begin{center}\includegraphics[width=0.45\columnwidth]{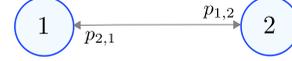}\end{center}
\vskip -5pt\protect\caption{Mode transition diagram for $\{r(k)\in\mathcal{M}\triangleq\{1,2\}\}_{k\in\mathbb{N}_{0}}$}
\label{Flo:modetransdiag}
\end{figure}

\subsection{Feedback Control Under Randomly Observed Mode Information}

In this paper, active mode of the switched stochastic system (\ref{eq:control-system})
is assumed to be observed only at random time instants, which we denote
by $t_{i}\in\mathbb{N}_{0}$, $i\in\mathbb{N}_{0}$. We assume that
$t_{0}=0$ and $\tau_{i}\triangleq t_{i}-t_{i-1}\in\mathbb{N}$, $i\in\mathbb{N}$,
are independent random variables that are distributed according to
a common distribution $\mu:\mathbb{N}\to[0,1]$ for all $i\in\mathbb{N}$
such that $\hat{\tau}\triangleq\sum_{\tau\in\mathbb{N}}\tau\mu_{\tau}<\infty$.
In this problem setting, the initial mode information $r_{0}$ is
assumed to be available to the controller, and a renewal process $\{N(k)\in\mathbb{N}_{0}\}_{k\in\mathbb{N}_{0}}$
is employed for counting the number of mode observations that are
obtained after the initial time. We assume that the renewal process
$\{N(k)\in\mathbb{N}_{0}\}_{k\in\mathbb{N}_{0}}$ and the mode signal
$\{r(k)\in\mathcal{M}\}_{k\in\mathbb{N}_{0}}$ are mutually independent.

Following our approach in \citeasnoun{cetinkaya2011}, \citeasnoun{cetinkayaacc2012}, \citeasnoun{cetinkaya2013a},
we employ a linear feedback control law with a `piecewise-constant'
feedback gain matrix that depends only on the obtained mode information.
Specifically, we consider the control law 
\begin{align}
u(k) & =K_{\sigma(k)}x(k),\quad k\in\mathbb{N}_{0},\label{eq:control-law}
\end{align}
 where $\{\sigma(k)\in\mathcal{M}\}_{k\in\mathbb{N}_{0}}$ is the
sampled version of the mode signal defined by 
\begin{align}
\sigma(k) & \triangleq r(t_{N(k)}),\quad k\in\mathbb{N}_{0}.\label{eq:sigmadef}
\end{align}
Note that the sampled mode signal $\{\sigma(k)\in\mathcal{M}\}_{k\in\mathbb{N}_{0}}$
acts as a switching mechanism for the linear feedback gain, which
remains constant between two consecutive mode observation instants,
that is, $K_{\sigma(k)}=K_{r(t_{i})}$ for $k\in[t_{i},t_{i+1})$. 

Between two consecutive mode observation instants, the feedback gain
$K_{\sigma(\cdot)}$ stays constant, whereas the active mode $r(\cdot)$
of the dynamical system (\ref{eq:control-system}) may change its
value. Stabilization performance under the control law (\ref{eq:control-law})
hence depends not only on the length of the intervals between random
mode observation instants, but also on how the active mode switches
during the intervals. 

In Figure~\ref{Flo:rsnfigure}, we show sample paths of the active
mode signal $r(\cdot)$ and its sampled version $\sigma(\cdot)$ for
a switched stochastic system with $M=2$ modes. In this example, active
mode is observed at time instants $t_{0}=0$, $t_{1}=2$, $t_{2}=5$,
$t_{3}=6$, $t_{4}=8$, $\ldots$. Note that at mode observation instants
actual mode signal $r(\cdot)$ and its sampled version $\sigma(\cdot)$
have the same value. However, at the other time instants, sampled
mode signal may differ from the actual mode, since between mode observation
instants, system mode may switch. 
\begin{figure}[t]
\begin{center}\includegraphics[width=0.85\columnwidth]{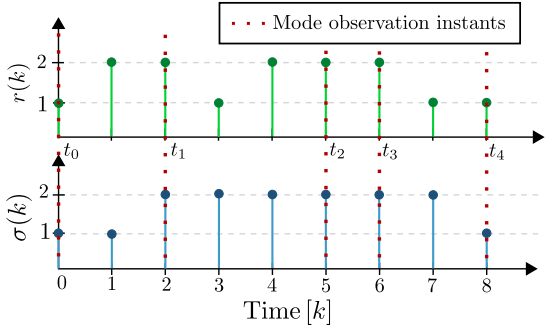}\end{center}
\vskip -5pt\protect\caption{Actual mode $r(k)$ and its sampled version $\sigma(k)$ }
\label{Flo:rsnfigure}
\end{figure}

In order to investigate the evolution of the active mode between consecutive
mode observation instants, we construct a new stochastic process $\{s(i)\}_{i\in\mathbb{N}_{0}}$
that takes values from a countable set of mode sequences of variable
length. Specifically, we define $\{s(i)\}_{i\in\mathbb{N}_{0}}$ by
\begin{align}
s(i) & \triangleq\big(r(t_{i}),r(t_{i}+1),\ldots,r(t_{i+1}-1)\big),\,\, i\in\mathbb{N}_{0},\label{eq:skdef}
\end{align}
 with $t_{i}$, $i\in\mathbb{N}_{0}$, being the random mode observation
instants. By the definition given in (\ref{eq:skdef}), $s(i)$ represents
the sequence of values that the active mode $r(\cdot$) takes between
the mode observation instants $t_{i}$ and $t_{i+1}$. Hence, $s_{n}(i)$,
which denotes the $n$th element of the sequence $s(i)$, represents
the value of the active mode $r(\cdot)$ at time $t_{i}+n-1$. Furthermore,
the value of the sampled mode signal $\sigma(\cdot)$ between time
instants $t_{i}$ and $t_{i+1}$ is represented by $s_{1}(i)=r(t_{i})$.
Note that the active mode is observed and becomes available for control
purposes only at time instants $t_{i}$, $i\in\mathbb{N}_{0}$. Thus,
the controller has access only to the observed mode data $\sigma(t_{i})=r(t_{i})$,
$i\in\mathbb{N}_{0}$, which correspond to the first elements of the
sequences $s(i)$, $i\in\mathbb{N}_{0}$. 

For the sample paths of active mode signal $r(\cdot)$ and its sampled
version $\sigma(\cdot)$ shown in Figure~\ref{Flo:rsnfigure}, mode
sequences between mode observation instants $t_{0}=0$, $t_{1}=2$,
$t_{2}=5$, $t_{3}=6$, $t_{4}=8$, are given as $s(0)=(1,2)$, $s(1)=(2,1,2)$,
$s(2)=(2)$, $s(3)=(2,1)$. The key property of the stochastic process
$\{s(i)\}_{i\in\mathbb{N}_{0}}$ is that, a given mode sequence $s(i)$
indicates full information of the active mode as well as the information
the controller has during the time interval between consecutive mode
observation instants $t_{i}$ and $t_{i+1}$. 

In what follows, we explain the probabilistic dynamics of the stochastic
process $\{s(i)\}_{i\in\mathbb{N}_{0}}$ and provide key results that
we will use in Section~\ref{sec:Sufficient-Conditions-for} for analyzing
stability of the closed-loop switched stochastic control system (\ref{eq:control-system}),
(\ref{eq:control-law}).

\subsection{Probabilistic Dynamics of Mode Sequences}

The possible values of sequence that the stochastic process $\{s(i)\}_{i\in\mathbb{N}_{0}}$
may take are characterized by the set 
\begin{align}
\mathcal{S}\triangleq\{(q_{1} & ,q_{2},\ldots,q_{\tau}):p_{q_{n},q_{n+1}}>0,\, n\in\{1,\ldots,\tau-1\};\,\,\,\,\,\,\,\,\quad\nonumber \\
 & q_{n}\in\mathcal{M},\, n\in\{1,\ldots,\tau\};\,\mu_{\tau}>0\}.
\end{align}
Note that the sequence-valued stochastic process $\{s(i)\}_{i\in\mathbb{N}_{0}}$
is a discrete-time Markov chain on the countable state space represented
by $\mathcal{S}$, which contains all possible mode sequences for
all possible lengths of intervals between consecutive mode observation
instants. For example, consider the case where the switched system
(\ref{eq:control-system}) has two modes. Furthermore, suppose that
$\mu_{\tau}>0$ for all $\tau\in\mathbb{N}$. In other words, lengths
of intervals between mode observation instants may take any positive
integer value. In this case, the state space $\mathcal{S}=\{(1),(2),(1,1),(1,2),\ldots\}$
contains all finite-length mode sequences composed of elements from
$\mathcal{M}=\{1,2\}$. See Figure~\ref{Flo:transforscountable}
for the transition diagram of countable-state Markov chain $\{s(i)\in\mathcal{S}\}_{i\in\mathbb{N}_{0}}$
of this example. 

\begin{figure}[t]
\begin{center}\includegraphics[width=0.76\columnwidth]{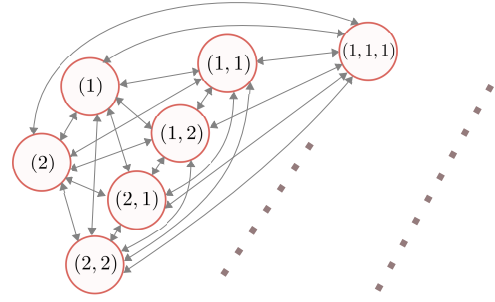}\end{center}
\vskip -5pt\protect\caption{Transition diagram of the sequence-valued discrete-time countable-state
Markov chain $\{s(i)\in\mathcal{S}\triangleq\{(1),(2),(1,1),\ldots\}\}_{i\in\mathbb{N}_{0}}$
over the set of mode sequences of variable length }
\label{Flo:transforscountable}
\end{figure}

It is important to note that if the set $\{\tau\in\mathbb{N}\,:\,\mu_{\tau}>0\}$
has finite number of elements, then set $\mathcal{S}$ will also contain
finite number of sequences. In other words, if the lengths of intervals
between mode observation instants have finite number of possible values,
then the number of possible sequences is also finite. For example,
consider the case where the operation mode of the switched system,
which takes values from the index set $\mathcal{M}=\{1,2\}$, is observed
periodically with period $2$, that is, $\mu_{2}=1$. In this case,
$\mathcal{S}=\{(1,1),(1,2),(2,1),(2,2)\}$ (see Figure~\ref{Flo:transforsfinite}). 

\begin{figure}[t]
\begin{center}\includegraphics[width=0.65\columnwidth]{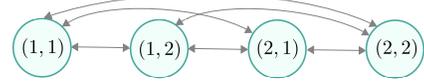}\end{center}
\vskip -5pt\protect\caption{Transition diagram of the sequence-valued discrete-time Markov chain
$\{s(i)\in\mathcal{S}\triangleq\{(1,1),(1,2),(2,1),(2,2)\}\}_{i\in\mathbb{N}_{0}}$ }
\label{Flo:transforsfinite}
\end{figure}

We now characterize the initial distribution and the state-transition
probabilities of the discrete-time Markov chain $\{s(i)\in\mathcal{S}\}_{i\in\mathbb{N}_{0}}$
as functions of the initial distribution and the state-transition
probabilities of the mode signal $\{r(k)\in\mathcal{M}\}_{k\in\mathbb{N}_{0}}$.
Specifically, the initial distribution $\lambda:\mathcal{S}\to[0,1]$
of the Markov chain $\{s(i)\in\mathcal{S}\}_{i\in\mathbb{N}_{0}}$
is given by 
\begin{align}
\lambda_{q} & =\mathbb{P}[s(0)=q]\nonumber \\
 & =\mathbb{P}[t_{1}=|q|,r(0)=q_{1},\ldots,r(|q|-1)=q_{|q|}]\nonumber \\
 & =\mathbb{P}[t_{1}=|q|\,\,\big|\,\, r(0)=q_{1},\ldots,r(|q|-1)=q_{|q|}]\nonumber \\
 & \quad\cdot\mathbb{P}[r(0)=q_{1},\ldots,r(|q|-1)=q_{|q|}],\,\, q\in\mathcal{S}.
\end{align}
Since the mode signal $\{r(k)\in\mathcal{M}\}_{k\in\mathbb{N}_{0}}$
and the mode observation counting process $\{N(k)\in\mathbb{N}_{0}\}_{k\in\mathbb{N}_{0}}$
are mutually independent, mode transitions and mode observations occur
independently. Hence, $t_{1}=\tau_{1}$ is independent of $r(n)$
for every $n\in\mathbb{N}_{0}$. As a consequence, 
\begin{align}
\lambda_{q} & =\mathbb{P}[t_{1}=|q|]\,\mathbb{P}[r(0)=q_{1},\ldots,r(|q|-1)=q_{|q|}]\nonumber \\
 & =\mathbb{P}[t_{1}=|q|]\,\mathbb{P}[r(0)=q_{1}]\nonumber \\
 & \quad\cdot\prod_{n=1}^{|q|-1}\mathbb{P}[r(n)=q_{n+1}|r(n-1)=q_{n}]\nonumber \\
 & =\begin{cases}
\mu_{|q|}\prod_{n=1}^{|q|-1}p_{q_{n},q_{n+1}}, & \quad\mathrm{if}\,\, q_{1}=r_{0},\,\, q\in\mathcal{S},\\
0, & \quad\mathrm{otherwise}.
\end{cases}\label{eq:sinitdist}
\end{align}
Note that $s_{1}(0)$, which is the first element of the first mode
sequence $s(0)$, is equal to the initial mode $r_{0}$. 

Probability of a transition from a mode sequence $q\in\mathcal{S}$
to another mode sequence $\bar{q}\in\mathcal{S}$ is given by 
\begin{align}
\rho_{q,\bar{q}} & =\mathbb{P}[s(i+1)=\bar{q}|s(i)=q],\nonumber \\
 & =\mathbb{P}\big[\tau_{i+1}=|\bar{q}|,r(t_{i+1})=\bar{q}_{1},\ldots,\nonumber \\
 & \quad\quad r(t_{i+1}+|\bar{q}|-1)=\bar{q}_{|\bar{q}|}\,\big|\,\tau_{i}=|q|,\nonumber \\
 & \quad\quad r(t_{i})=q_{1},\ldots,r(t_{i}+|q|-1)=q_{|q|}\big],
\end{align}
 for $i\in\mathbb{N}_{0}$. Note that $\tau_{i+1}$ is independent
of the random variables $r(n),\, n\in\mathbb{N}_{0}$, and $\tau_{i}$.
Furthermore, given $r(t_{i}+\tau_{i}-1)$, the random variable $r(t_{i+1})$
is conditionally independent of $r(t_{i}),\ldots,r(t_{i}+\tau_{i}-2)$,
and $\tau_{i}$. It follows that 
\begin{align}
\rho_{q,\bar{q}} & =\mathbb{P}\big[\tau_{i+1}=|\bar{q}|,r(t_{i+1})=\bar{q}_{1},\ldots,\nonumber \\
 & \quad\quad r(t_{i+1}+|\bar{q}|-1)=\bar{q}_{|\bar{q}|}\,\big|\, r(t_{i}+|q|-1)=q_{|q|}\big]\nonumber \\
 & =\mathbb{P}[r(t_{i+1})=\bar{q}_{1}\,|\, r(t_{i}+|q|-1)=q_{|q|}]\mathbb{P}[\tau_{i+1}=|\bar{q}|]\nonumber \\
 & \,\,\,\,\,\,\cdot\prod_{n=1}^{|\bar{q}|-1}\mathbb{P}[r(t_{i+1}+n)=\bar{q}_{n+1}|r(t_{i+1}+n-1)=\bar{q}_{n}]\nonumber \\
 & \,=p_{q_{|q|},\bar{q}_{1}}\,\,\,\mu_{|\bar{q}|}\,\prod_{n=1}^{|\bar{q}|-1}p_{\bar{q}_{n},\bar{q}_{n+1}},\quad i\in\mathbb{N}_{0}.\label{eq:stransprob}
\end{align}
Note that $\mu_{|\bar{q}|}$ in (\ref{eq:stransprob}) represents
the probability that length of the interval between two mode observation
instants is equal to the length of the sequence $\bar{q}$, whereas
$p_{q_{|q|},\bar{q}_{1}}\in[0,1]$ represents the transition probability
from the mode represented by the last element of sequence $q$, to
the mode represented by the first element of the sequence $\bar{q}$.
Furthermore, the expression $\prod_{n=1}^{|\bar{q}|-1}p_{\bar{q}_{n},\bar{q}_{n+1}}$
denotes the joint probability that the active mode takes the values
denoted by the elements of the sequence $\bar{q}$ until the next
mode observation instant. 

Since the mode signal $\{r(k)\in\mathcal{M}\}_{k\in\mathbb{N}_{0}}$
is aperiodic and irreducible, mode sequences may start with any of
the possible modes indicated by the index set $\mathcal{M}=\{1,\ldots,M\}$.
Furthermore, it is possible to reach from any mode sequence to another
mode sequence in a finite number of mode observations. Hence, the
discrete-time Markov chain $\{s(i)\in\mathcal{S}\}_{i\in\mathbb{N}_{0}}$
is irreducible. In Lemma~\ref{invariantdistributionlemma} below,
we provide the invariant distribution for the countable-state discrete-time
Markov chain $\{s(i)\in\mathcal{S}\}_{i\in\mathbb{N}_{0}}$. Note
that the distribution $\phi:\mathcal{S}\to[0,1]:j\mapsto\phi_{j}$
is called \emph{invariant distribution} of the Markov chain $\{s(i)\in\mathcal{S}\}_{i\in\mathbb{N}_{0}}$
if $\phi_{j}=\sum_{i\in\mathcal{S}}\phi_{i}\rho_{i,j}$, $j\in\mathcal{S}$.
The invariant distribution for the case where $\mathcal{S}$ contains
only sequences of fixed length $T\in\mathbb{N}$ is provided in \citeasnoun{serfozo2009}.
In Lemma~\ref{invariantdistributionlemma}, we consider the more
general case where $\mathcal{S}$ may contain countably infinite number
of sequences of all possible lengths.

\begin{lem}\label{invariantdistributionlemma}Discrete-time Markov
chain $\{s(i)\in\mathcal{S}\}_{i\in\mathbb{N}_{0}}$ has invariant
distribution $\phi:\mathcal{S}\to[0,1]:q\mapsto\phi_{q}$ given by
\begin{align}
\phi_{q} & \triangleq\pi_{q_{1}}\mu_{|q|}\prod_{n=1}^{|q|-1}p_{q_{n},q_{n+1}},\quad q\in\mathcal{S},\label{eq:phiq}
\end{align}
 where $\pi:\mathcal{M}\to[0,1]$ and $p_{i,j}$, $i,j\in\mathcal{M}$,
respectively denote the invariant distribution and transition probabilities
of the finite-state Markov chain $\{r(k)\in\mathcal{M}\}_{k\in\mathbb{N}_{0}}$. 

\end{lem}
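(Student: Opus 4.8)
The plan is to verify directly that the proposed $\phi$ satisfies the defining fixed-point equation $\phi_{\bar q}=\sum_{q\in\mathcal S}\phi_q\rho_{q,\bar q}$ for every $\bar q\in\mathcal S$, and that it is a genuine probability distribution. Because every summand involved is nonnegative, Tonelli's theorem licenses any rearrangement or interchange of the (possibly infinite) sums indexed by $\mathcal S$, so the infinitude of the state space reduces to a bookkeeping issue rather than a genuine analytic one.

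First I would substitute the expression (\ref{eq:phiq}) for $\phi_q$ and the transition probability (\ref{eq:stransprob}) for $\rho_{q,\bar q}$ into the right-hand side of the invariance equation. The crucial observation is that the factor $\mu_{|\bar q|}\prod_{n=1}^{|\bar q|-1}p_{\bar q_n,\bar q_{n+1}}$ arising from $\rho_{q,\bar q}$ depends only on $\bar q$, not on the summation variable $q$, so it can be pulled outside the sum. What remains is
\[
\mu_{|\bar q|}\Big(\prod_{n=1}^{|\bar q|-1}p_{\bar q_n,\bar q_{n+1}}\Big)\sum_{q\in\mathcal S}\pi_{q_1}\mu_{|q|}\Big(\prod_{n=1}^{|q|-1}p_{q_n,q_{n+1}}\Big)p_{q_{|q|},\bar q_1},
\]
and it suffices to show that the remaining sum equals $\pi_{\bar q_1}$, since the prefactor then reconstitutes exactly $\phi_{\bar q}$.

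The key step is evaluating that sum. I would regroup the sum over $\mathcal S$ as a sum over the admissible interval length $\tau=|q|$ weighted by $\mu_\tau$, followed by a sum over the mode values $q_1,\dots,q_\tau\in\mathcal M$; terms violating $p_{q_n,q_{n+1}}>0$ or $\mu_\tau>0$ contribute zero, so the restriction to $\mathcal S$ may be dropped without changing the value. For fixed $\tau$ the inner sum telescopes: summing over $q_1$ and using invariance of $\pi$ under $P$ (that is, $\sum_i\pi_i p_{i,j}=\pi_j$) gives $\sum_{q_1}\pi_{q_1}p_{q_1,q_2}=\pi_{q_2}$, and repeating this collapse over $q_2,\dots,q_\tau$ leaves $\pi_{\bar q_1}$, independent of $\tau$. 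Summing over $\tau$ against $\mu_\tau$ and invoking $\sum_{\tau}\mu_\tau=1$ then yields $\pi_{\bar q_1}$, as required. An analogous telescoping---now summing the last index against $\sum_j p_{\cdot,j}=1$ and working down to $\sum_i\pi_i=1$---confirms $\sum_{q\in\mathcal S}\phi_q=1$, so $\phi$ is indeed a probability distribution.

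I expect the main obstacle to be organizational rather than analytical: keeping the variable-length sequences and the two distinct telescoping directions straight, and being explicit that the nonnegativity of every term, via Tonelli, justifies reindexing the infinite state space $\mathcal S$ by $(\tau,q_1,\dots,q_\tau)$ and exchanging the order of summation. Once the reindexing is in place, the repeated use of $\pi P=\pi$ together with the stochasticity of $P$ does all the substantive work.
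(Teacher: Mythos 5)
Your proposal is correct and follows essentially the same route as the paper's proof: substitute (\ref{eq:phiq}) and (\ref{eq:stransprob}) into the invariance equation, pull out the factor depending only on $\bar q$, reindex the sum over $\mathcal{S}$ by length $\tau$ and mode values $q_{1},\ldots,q_{\tau}\in\mathcal{M}$, and telescope using $\pi P=\pi$ and $\sum_{\tau}\mu_{\tau}=1$. Your additional remarks --- invoking Tonelli to justify the reindexing over the countable state space, observing that terms outside $\mathcal{S}$ vanish so the sum may be taken over all of $\mathcal{M}^{\tau}$, and checking $\sum_{q\in\mathcal{S}}\phi_{q}=1$ --- are sound refinements of details the paper leaves implicit, not a different argument.
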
 

\begin{proof} We prove this result by showing that $\phi_{\bar{q}}=\sum_{q\in\mathcal{S}}\phi_{q}\rho_{q,\bar{q}}$,
for all $\bar{q}\in\mathcal{S}$. First, by (\ref{eq:stransprob})
and (\ref{eq:phiq}) 
\begin{align}
\sum_{q\in\mathcal{S}}\phi_{q}\rho_{q,\bar{q}} & =\big(\sum_{q\in\mathcal{S}}\pi_{q_{1}}\mu_{|q|}\big(\prod_{n=1}^{|q|-1}p_{q_{n},q_{n+1}}\big)p_{q_{|q|},\bar{q}_{1}}\big)\nonumber \\
 & \quad\cdot\mu_{|\bar{q}|}\,\prod_{n=1}^{|\bar{q}|-1}p_{\bar{q}_{n},\bar{q}_{n+1}},\quad\bar{q}\in\mathcal{S}.\label{eq:phiqsum}
\end{align}
 Now let $\mathcal{S}_{\tau}\triangleq\{q\in\mathcal{S}\,:\,|q|=\tau\},$
$\tau\in\mathbb{N}$. Note that the set $\mathcal{S}_{\tau}$ contains
all mode sequences of length $\tau$. We rewrite the sum in (\ref{eq:phiqsum})
to obtain 
\begin{align}
 & \sum_{q\in\mathcal{S}}\pi_{q_{1}}\mu_{|q|}\big(\prod_{n=1}^{|q|-1}p_{q_{n},q_{n+1}}\big)p_{q_{|q|},\bar{q}_{1}}\nonumber \\
 & \,=\sum_{\tau\in\mathbb{N}}\mu_{\tau}\sum_{q\in\mathcal{S}_{\tau}}\pi_{q_{1}}\big(\prod_{n=1}^{\tau-1}p_{q_{n},q_{n+1}}\big)p_{q_{\tau},\bar{q}_{1}}\nonumber \\
 & \,=\sum_{\tau\in\mathbb{N}}\mu_{\tau}\sum_{q_{\tau}\in\mathcal{M}}\cdots\sum_{q_{1}\in\mathcal{M}}\pi_{q_{1}}\big(\prod_{n=1}^{\tau-1}p_{q_{n},q_{n+1}}\big)p_{q_{\tau},\bar{q}_{1}}.\label{eq:suminphiqsum}
\end{align}
 Note that since $\pi:\mathcal{M}\to[0,1]$ is the invariant distribution
of the finite-state Markov chain $\{r(k)\in\mathcal{M}\}_{k\in\mathbb{N}_{0}}$,
it follows that $\sum_{i\in\mathcal{M}}\pi_{i}p_{i,j}=\pi_{j}$, $i,j\in\mathcal{M}$.
Thus, we have $\sum_{q_{n}\in\mathcal{M}}\pi_{q_{n}}p_{q_{n},q_{n+1}}=\pi_{q_{n+1}}$,
$n\in\{1,\ldots,\tau-1\}$, and $\sum_{q_{\tau}\in\mathcal{M}}\pi_{q_{\tau}}p_{q_{\tau},\bar{q}_{1}}=\pi_{\bar{q}_{1}}$.
As a result, from (\ref{eq:suminphiqsum}) we obtain 
\begin{align}
\sum_{q\in\mathcal{S}}\pi_{q_{1}}\mu_{|q|}\big(\prod_{n=1}^{|q|-1}p_{q_{n},q_{n+1}}\big)p_{q_{|q|},\bar{q}_{1}} & =\sum_{\tau\in\mathbb{N}}\mu_{\tau}\pi_{\bar{q}_{1}}\nonumber \\
 & =\pi_{\bar{q}_{1}}.\label{eq:suminphiqsumfinalexpression}
\end{align}
 Finally, substituting (\ref{eq:suminphiqsumfinalexpression}) into
(\ref{eq:phiqsum}) yields 
\begin{align}
\sum_{q\in\mathcal{S}}\phi_{q}\rho_{q,\bar{q}} & =\pi_{\bar{q}_{1}}\mu_{|\bar{q}|}\,\prod_{n=1}^{|\bar{q}|-1}p_{\bar{q}_{n},\bar{q}_{n+1}}=\phi_{\bar{q}},\quad\bar{q}\in\mathcal{S},
\end{align}
which completes the proof. \end{proof} 

We have now established that the countable-state Markov chain $\{s(k)\in\mathcal{S}\}_{k\in\mathbb{N}_{0}}$
is irreducible and has the invariant distribution $\phi:\mathcal{S}\to[0,1]$
presented in Lemma~\ref{invariantdistributionlemma}. Note that the
strong law of large numbers (also called ergodic theorem; see \citeasnoun{norris2009}, \citeasnoun{serfozo2009}, \citeasnoun{durrett2010})
for discrete-time Markov chains states that $\mathbb{P}[\lim_{n\to\infty}\frac{1}{n}\sum_{k=0}^{n-1}\xi_{s(k)}=\sum_{i\in\mathcal{S}}\phi_{i}\xi_{i}]=1$,
for any $\xi_{i}\in\mathbb{R}$, $i\in\mathcal{S}$, such that $\sum_{i\in\mathcal{S}}\phi_{i}|\xi_{i}|<\infty$.
This result for the countable-state Markov chain $\{s(k)\in\mathcal{S}\}_{k\in\mathbb{N}_{0}}$
is crucial to obtain the main results of Section~\ref{sec:Sufficient-Conditions-for}
below. Specifically, in our stability analysis we utilize the ergodic
theorem for Markov chains. In the literature, for the stability analysis
of finite-mode \cite{bolzern2004almost} and infinite-mode \cite{li2012exponential}
discrete-time switched stochastic systems, researchers employed ergodic
theorem for the Markov chain that characterizes the mode signal. In
the next section, we use ergodic theorem for the Markov chain that
characterizes the sequence of mode values between consecutive mode
observation instants.

\section{Sufficient Conditions for Almost Sure Asymptotic Stabilization\label{sec:Sufficient-Conditions-for}}

In this section, we employ the results presented in Section~\ref{sec:SwitchedStochasticSection3}
to obtain sufficient conditions for almost sure asymptotic stabilization
of the closed-loop system (\ref{eq:control-system}) under the control
law (\ref{eq:control-law}). 

\begin{thm}\label{maintheorem}Consider the switched linear stochastic
system (\ref{eq:control-system}). If there exist matrices $\tilde{R}>0$,
$L_{i}\in\mathbb{R}^{m\times n},\, i\in\mathcal{M}$, and scalars
$\zeta_{i,j}\in(0,\infty)$, $i,j\in\mathcal{M}$, such that 
\begin{align}
 & \,0\geq(A_{i}\tilde{R}+B_{i}L_{j})^{\mathrm{T}}\tilde{R}^{-1}\nonumber \\
 & \,\,\quad\quad\cdot(A_{i}\tilde{R}+B_{i}L_{j})-\zeta_{i,j}\tilde{R},\quad i,j\in\mathcal{M},\label{eq:condp}\\
 & \sum_{\tau\in\mathbb{N}}\mu_{\tau}\sum_{l=1}^{\tau}\sum_{i,j\in\mathcal{M}}\pi_{i}p_{i,j}^{(l-1)}\ln\zeta_{j,i}<0,\label{eq:condzeta}
\end{align}
then the control law (\ref{eq:control-law}) with the feedback gain
matrix 
\begin{align}
K_{\sigma(k)} & =L_{\sigma(k)}\tilde{R}^{-1},\label{eq:controllawintheorem}
\end{align}
 guarantees that the zero solution $x(k)\equiv0$ of the closed-loop
system (\ref{eq:control-system}) and (\ref{eq:control-law}) is asymptotically
stable almost surely. \end{thm}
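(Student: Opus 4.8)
The plan is to establish almost sure asymptotic stability by constructing a Lyapunov-type function $V(x) = x^{\mathrm{T}} \tilde{R}^{-1} x$ and tracking its evolution along the trajectories of the closed-loop system, sampled at the random mode observation instants $t_i$. First I would analyze how $V$ propagates over a single inter-observation interval. Fix an interval with $s(i) = q = (q_1, \ldots, q_{|q|})$, so that on $[t_i, t_{i+1})$ the feedback gain is frozen at $K_{q_1} = L_{q_1}\tilde{R}^{-1}$ while the active mode visits $q_1, q_2, \ldots, q_{|q|}$ in succession. The closed-loop dynamics at each step within the interval read $x(k+1) = (A_{q_n} + B_{q_n} K_{q_1})x(k)$, and I would use condition (\ref{eq:condp}) to show that each such step scales $V$ by at most $\zeta_{q_n, q_1}$. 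Concretely, multiplying (\ref{eq:condp}) on both sides by $\tilde{R}^{-1}$ and completing the appropriate congruence, one obtains $(A_{i}+B_{i}K_{j})^{\mathrm{T}}\tilde{R}^{-1}(A_{i}+B_{i}K_{j}) \leq \zeta_{i,j}\tilde{R}^{-1}$, which yields $V(x(k+1)) \leq \zeta_{q_n,q_1} V(x(k))$ at each substep. Chaining these bounds across the whole interval gives a contraction factor over the interval equal to $\prod_{n=1}^{|q|} \zeta_{q_n, q_1}$ (with appropriate index bookkeeping, since the gain index is always the first element $q_1$ of the current sequence).

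Next I would pass to the sampled sequence $V(x(t_i))$ and take logarithms to turn the multiplicative accumulation into an additive one. Writing $V(x(t_{i+1})) \leq \big(\prod_{n=1}^{|s(i)|}\zeta_{s_n(i), s_1(i)}\big) V(x(t_i))$, telescoping from $i=0$ to $i=N-1$ and taking logarithms gives
\begin{align}
\ln V(x(t_N)) \leq \ln V(x(t_0)) + \sum_{i=0}^{N-1} \sum_{n=1}^{|s(i)|} \ln \zeta_{s_n(i), s_1(i)}. \nonumber
\end{align}
The decisive step is to show that the running average of the per-interval increments converges to a negative quantity almost surely. Here I would invoke the ergodic theorem for the countable-state Markov chain $\{s(i)\in\mathcal{S}\}_{i\in\mathbb{N}_0}$ (stated just before the theorem), applied to the bounded functional $\xi_q \triangleq \sum_{n=1}^{|q|}\ln\zeta_{q_n, q_1}$. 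This gives $\frac{1}{N}\sum_{i=0}^{N-1}\xi_{s(i)} \to \sum_{q\in\mathcal{S}}\phi_q \xi_q$ almost surely, and I would then compute this limit explicitly using the invariant distribution $\phi_q = \pi_{q_1}\mu_{|q|}\prod p_{q_n,q_{n+1}}$ from Lemma~\ref{invariantdistributionlemma}. The computation should collapse the sum over all sequences into the expression in (\ref{eq:condzeta}): summing over the intermediate mode values at positions governed by the transition probabilities produces the $l$-step transition factor $p_{i,j}^{(l-1)}$ and the stationary weight $\pi_i$, yielding exactly $\sum_{\tau}\mu_\tau \sum_{l=1}^{\tau}\sum_{i,j}\pi_i p_{i,j}^{(l-1)}\ln\zeta_{j,i}$, which is assumed negative.

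Having a strictly negative ergodic average, I would conclude that $\ln V(x(t_N)) \to -\infty$ almost surely, hence $V(x(t_N)) \to 0$ and $x(t_N) \to 0$ along the observation instants. To upgrade convergence along the subsequence $\{t_N\}$ to convergence of the full trajectory $x(k) \to 0$, I would control the intra-interval growth: within each interval $V$ can grow by at most the finite factor determined by the $\zeta$'s raised to the interval length, and since $\hat{\tau} = \mathbb{E}[\tau_i] < \infty$ the interval lengths do not grow too fast (e.g. $\tau_i / i \to 0$ almost surely, a consequence of finite mean and Borel--Cantelli), so the per-interval overshoot is dominated by the geometric decay of $V(x(t_N))$. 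This establishes the convergence requirement (\ref{eq:definition-convergence}). The remaining almost sure stability (the $\epsilon$--$\rho$ condition on $\sup_k \|x(k)\|$) would follow from a maximal-type bound combined with the integrability of the supremum over a single interval. The main obstacle I anticipate is twofold: carrying out the combinatorial reindexing that converts $\sum_{q\in\mathcal{S}}\phi_q\xi_q$ into the closed form (\ref{eq:condzeta}) with the correct appearance of $p_{i,j}^{(l-1)}$ and the transposed index $\zeta_{j,i}$, and rigorously bridging the gap between subsequential convergence at $\{t_N\}$ and full convergence, where the unbounded (though finite-mean) interval lengths require a careful Borel--Cantelli or law-of-large-numbers argument rather than a naive uniform bound.
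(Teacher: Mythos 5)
Your proposal follows essentially the same route as the paper: the quadratic Lyapunov-like function $V(x)=x^{\mathrm{T}}\tilde{R}^{-1}x$, the per-step bound $V(x(k+1))\leq\zeta_{r(k),\sigma(k)}V(x(k))$ extracted from (\ref{eq:condp}), and the ergodic theorem for the sequence-valued chain $\{s(i)\in\mathcal{S}\}$ applied to $\xi_{q}=\sum_{n=1}^{|q|}\ln\zeta_{q_{n},q_{1}}$, whose mean under the invariant distribution $\phi$ collapses to the left-hand side of (\ref{eq:condzeta}) exactly as you describe. The only structural difference is that you telescope at the observation instants and normalize by the number of intervals $N$, then bridge back to real time; the paper instead normalizes $\ln\eta(k)$ by $k$ directly, writes $\frac{1}{k}\sum_{i}\xi_{s(i)}=\frac{N(k)}{k}\cdot\frac{1}{N(k)}\sum_{i}\xi_{s(i)}$, and invokes the renewal strong law $N(k)/k\to1/\hat{\tau}$, with the residual partial-interval sum being $o(k)$ --- the same fact underlying your $\tau_{i}/i\to0$ step, so the two routes are equivalent in substance. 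Two caveats. First, $\xi_{q}$ is \emph{not} a bounded functional on $\mathcal{S}$ when infinitely many $\mu_{\tau}$ are positive, since $|\xi_{q}|$ grows linearly in $|q|$; the ergodic theorem requires $\sum_{q\in\mathcal{S}}\phi_{q}|\xi_{q}|<\infty$, which holds because $\sum_{q}\phi_{q}|\xi_{q}|\leq\hat{\tau}\max_{i,j}|\ln\zeta_{i,j}|<\infty$ --- this is precisely where $\hat{\tau}<\infty$ enters, so you should verify integrability rather than assert boundedness. Second, the almost sure \emph{stability} half of the claim (the $\epsilon$--$\rho$ bound on $\sup_{k}\|x(k)\|$) is only gestured at in your last sentence; the paper carries it out by converting $\eta(k)\to0$ a.s.\ into $\mathbb{P}[\sup_{k\geq n}\eta(k)>\epsilon^{2}]<\rho$ for $n$ large and pairing this with the deterministic finite-horizon bound $V(x(k))\leq\bar{\zeta}^{k-1}V(x(0))$, where $\bar{\zeta}=\max\{1,\max_{i,j}\zeta_{i,j}\}$, and this step does need to be written out since some $\zeta_{i,j}$ may exceed one. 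Neither caveat changes the architecture of your argument.
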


\begin{proof} First, we define $V(x)\triangleq x^{\mathrm{T}}Rx$,
where $R\triangleq\tilde{R}^{-1}$. It follows from (\ref{eq:control-system})
and (\ref{eq:control-law}) that for $k\in\mathbb{N}_{0}$, 
\begin{align}
V(x(k+1)) & =x^{\mathrm{T}}(k)(A_{r(k)}+B_{r(k)}K_{\sigma(k)})^{\mathrm{T}}R\nonumber \\
 & \,\,\,\,\,\,\,\,\cdot(A_{r(k)}+B_{r(k)}K_{\sigma(k)})x(k).\label{eq:vevolution}
\end{align}
We set $L_{j}=K_{j}R^{-1}$, $j\in\mathcal{M}$, and use (\ref{eq:condp})
and (\ref{eq:vevolution}) to obtain 
\begin{align}
V(x(k+1)) & \leq\zeta_{r(k),\sigma(k)}V(x(k))\leq\eta(k)V(x(0)),\label{eq:vinequality}
\end{align}
 for $k\in\mathbb{N}_{0}$, where $\eta(k)\triangleq\prod_{n=0}^{k}\zeta_{r(n),\sigma(n)}$,
$k\in\mathbb{N}$. We will first show that $\eta(k)\to0$ almost surely
as $k\to\infty$. Note that $\eta(k)>0$, $k\in\mathbb{N}_{0}$. Then,
it follows that 
\begin{align}
\ln\eta(k) & =\sum_{n=0}^{k}\ln\zeta_{r(n),\sigma(n)}.\label{eq:sumforlneta}
\end{align}
By using the definitions of stochastic processes $\{N(k)\in\mathbb{N}_{0}\}_{k\in\mathbb{N}_{0}}$
and $\{s(i)\in\mathcal{S}\}_{i\in\mathbb{N}_{0}}$, we obtain 
\begin{align}
\ln\eta(k) & =\sum_{n=0}^{t_{N(k)}-1}\ln\zeta_{r(n),\sigma(n)}+\sum_{n=t_{N(k)}}^{k}\ln\zeta_{r(n),\sigma(n)}\nonumber \\
 & =\sum_{i=0}^{N(k)-1}\xi_{s(i)}+\sum_{n=t_{N(k)}}^{k}\ln\zeta_{r(n),\sigma(n)},\label{eq:lneta}
\end{align}
 where $\xi_{q}\triangleq\sum_{n=1}^{|q|}\ln\zeta_{q_{n},q_{1}}$,
$q\in\mathcal{S}$. 

Next, in order to evaluate $\lim_{k\to\infty}\frac{1}{k}\ln\eta(k)$,
note that $\lim_{k\to\infty}\frac{1}{k}\sum_{n=t_{N(k)}}^{k}\ln\zeta_{r(n),\sigma(n)}=0.$
Consequently, 
\begin{align}
\lim_{k\to\infty}\frac{1}{k}\ln\eta(k) & =\lim_{k\to\infty}\frac{1}{k}\sum_{i=0}^{N(k)-1}\xi_{s(i)}\nonumber \\
 & =\lim_{k\to\infty}\frac{N(k)}{k}\frac{1}{N(k)}\sum_{i=0}^{N(k)-1}\xi_{s(i)}.
\end{align}
It follows from strong law of large numbers for renewal processes
(Section~\ref{sub:Discrete-Time-Renewal-Processes}) that $\lim_{k\to\infty}\frac{N(k)}{k}=\frac{1}{\hat{\tau}}$,
where $\hat{\tau}=\sum_{\tau\in\mathbb{N}}\tau\mu_{\tau}$. Furthermore,
by the ergodic theorem for countable-state Markov chains, it follows
that $\lim_{n\to\infty}\frac{1}{n}\sum_{i=0}^{n-1}\xi_{s(i)}=\sum_{q\in\mathcal{S}}\phi_{q}\xi_{q}$.
Using the invariant distribution $\phi:\mathcal{S}\to[0,1]$ given
by (\ref{eq:phiq}), we get 
\begin{align}
 & \lim_{k\to\infty}\frac{1}{k}\ln\eta(k)\nonumber \\
 & \quad=\frac{1}{\hat{\tau}}\sum_{q\in\mathcal{S}}\big(\pi_{q_{1}}\mu_{|q|}\prod_{n=1}^{|q|-1}p_{q_{n},q_{n+1}}\big)\sum_{m=1}^{|q|}\ln\zeta_{q_{m},q_{1}}.\label{eq:afterstronglaws}
\end{align}
Let $\mathcal{S}_{\tau}\triangleq\{q\in\mathcal{S}\,:\,|q|=\tau\},\,\tau\in\mathbb{N}$.
Note that $\mathcal{S}_{\tau}$ contains all mode sequences of length
$\tau$. It follows from (\ref{eq:afterstronglaws}) that 
\begin{align}
 & \lim_{k\to\infty}\frac{1}{k}\ln\eta(k)\nonumber \\
 & \quad=\frac{1}{\hat{\tau}}\sum_{\tau\in\mathbb{N}}\sum_{q\in\mathcal{S}_{\mathrm{\tau}}}\big(\pi_{q_{1}}\mu_{|q|}\prod_{n=1}^{|q|-1}p_{q_{n},q_{n+1}}\big)\sum_{m=1}^{|q|}\ln\zeta_{q_{m},q_{1}}\nonumber \\
 & \quad=\frac{1}{\hat{\tau}}\sum_{\tau\in\mathbb{N}}\mu_{\tau}\sum_{q\in\mathcal{S}_{\tau}}\pi_{q_{1}}(\prod_{n=1}^{\tau-1}p_{q_{n},q_{n+1}})\sum_{m=1}^{\tau}\ln\zeta_{q_{m},q_{1}}\nonumber \\
 & \quad=\frac{1}{\hat{\tau}}\sum_{\tau\in\mathbb{N}}\mu_{\tau}\sum_{m=1}^{\tau}\sum_{q\in\mathcal{S}_{\tau}}\pi_{q_{1}}(\prod_{n=1}^{\tau-1}p_{q_{n},q_{n+1}})\ln\zeta_{q_{m},q_{1}}.\label{eq:afterstdef}
\end{align}
Furthermore, let $\mathcal{S}_{\tau,l}^{i,j}\triangleq\{q\in\mathcal{S}_{\tau}:q_{1}=i,q_{l}=j\}$,
$i,j\in\mathcal{M}$, $l\in\{1,2,\ldots,\tau-1\}$. The set $\mathcal{S}_{\tau,l}^{i,j}$
contains all mode sequences of length $\tau$ that have $i\in\mathcal{M}$
and $j\in\mathcal{M}$ as the $1$st and the $l$th elements, respectively.
We use (\ref{eq:pton}) to obtain 
\begin{align}
 & \sum_{q\in\mathcal{S}_{\tau}}\pi_{q_{1}}(\prod_{n=1}^{\tau-1}p_{q_{n},q_{n+1}})\ln\zeta_{q_{l},q_{1}}\nonumber \\
 & \quad=\sum_{i,j\in\mathcal{M}}\sum_{q\in\mathcal{S}_{\tau,l}^{i,j}}\pi_{q_{1}}(\prod_{n=1}^{\tau-1}p_{q_{n},q_{n+1}})\ln\zeta_{q_{l},q_{1}}\nonumber \\
 & \quad=\sum_{i,j\in\mathcal{M}}\pi_{i}(\ln\zeta_{j,i})\sum_{q\in\mathcal{S}_{\tau,l}^{i,j}}(\prod_{n=1}^{\tau-1}p_{q_{n},q_{n+1}})\nonumber \\
 & \quad=\sum_{i,j\in\mathcal{M}}\pi_{i}(\ln\zeta_{j,i})p_{i,j}^{(l-1)}.\label{eq:usestijdef}
\end{align}
Substituting (\ref{eq:usestijdef}) into (\ref{eq:afterstdef}) yields
\begin{align}
\lim_{k\to\infty}\frac{1}{k}\ln\eta(k) & =\frac{1}{\hat{\tau}}\sum_{\tau\in\mathbb{N}}\mu_{\tau}\sum_{l=1}^{\tau}\sum_{i,j\in\mathcal{M}}\pi_{i}p_{i,j}^{(l-1)}\ln\zeta_{j,i}.
\end{align}
Now, since $\hat{\tau}=\sum_{\tau\in\mathbb{N}}\tau\mu_{\tau}<\infty$,
as a result of (\ref{eq:condzeta}), we have $\lim_{k\to\infty}\frac{1}{k}\ln\eta(k)<0$.
Thus, $\lim_{k\to\infty}\ln\eta(k)=-\infty$ almost surely; furthermore,
$\mathbb{P}[\lim_{k\to\infty}\eta(k)=0]=1.$ In the following, we
first show that the zero solution is \emph{almost surely stable}.
To this end first note that for all $\epsilon>0$, $\lim_{n\to\infty}\mathbb{P}[\sup_{k\geq n}\eta(k)>\epsilon^{2}]=0,$
which implies that for all $\epsilon>0$ and $\rho>0$, there exists
a positive integer $N(\epsilon,\rho)$ such that $\mathbb{P}[\sup_{k\geq n}\eta(k)>\epsilon^{2}]<\rho$
for $n\geq N(\epsilon,\rho)$. Equivalently, 
\begin{align}
\mathbb{P}[\sup_{k\geq n}\eta(k-1)>\epsilon^{2}]<\rho,\quad n\geq N(\epsilon,\rho)+1.\label{eq:epsilon-rho-inequlity-k-minus-one}
\end{align}
By the definition of $V(\cdot)$ and (\ref{eq:vinequality}), we obtain
$\eta(k-1)\geq\frac{V(x(k))}{V(x(0))}\geq\frac{\lambda_{\min}(R)}{\lambda_{\max}(R)}\frac{\|x(k)\|^{2}}{\|x(0)\|^{2}}$
for all $k\in\mathbb{N}$. Hence, it follows from (\ref{eq:epsilon-rho-inequlity-k-minus-one})
that, for all $\epsilon>0$ and $\rho>0$, there exists a positive
integer $N(\epsilon,\rho)$ such that 
\begin{align}
 & \mathbb{P}[\sup_{k\geq n}\|x(k)\|>\epsilon\sqrt{\frac{\lambda_{\max}(R)}{\lambda_{\min}(R)}}\|x(0)\|]\nonumber \\
 & \quad=\mathbb{P}[\sup_{k\geq n}\|x(k)\|^{2}>\epsilon^{2}\frac{\lambda_{\max}(R)}{\lambda_{\min}(R)}\|x(0)\|^{2}]\nonumber \\
 & \quad=\mathbb{P}[\sup_{k\geq n}\frac{\lambda_{\min}(R)}{\lambda_{\max}(R)}\frac{\|x(k)\|^{2}}{\|x(0)\|^{2}}>\epsilon^{2}]\nonumber \\
 & \quad\leq\mathbb{P}[\sup_{k\geq n}\eta(k-1)>\epsilon^{2}]<\rho,\quad n\geq N(\epsilon,\rho)+1.
\end{align}
 Let $\delta_{1}\triangleq\sqrt{\frac{\lambda_{\min}(R)}{\lambda_{\max}(R)}}$.
If $\|x(0)\|\leq\delta_{1}$, then 
\begin{align}
 & \mathbb{P}[\sup_{k\geq n}\|x(k)\|>\epsilon]\nonumber \\
 & \quad\leq\mathbb{P}[\sup_{k\geq n}\|x(k)\|>\epsilon\sqrt{\frac{\lambda_{\max}(R)}{\lambda_{\min}(R)}}\|x(0)\|]\nonumber \\
 & \quad<\rho,\quad n\geq N(\epsilon,\rho)+1.\label{eq:epsilon-result-part1}
\end{align}
Now let $\bar{\zeta}\triangleq\max\{1,\max_{i,j\in\mathcal{M}}\zeta_{i,j}\}$.
It follows from (\ref{eq:vinequality}) that $V(x(k))\leq\bar{\zeta}^{k-1}V(x(0))\leq\bar{\zeta}^{N(\epsilon,\rho)-1}V(x(0))$
for all $k\in\{0,1,\ldots,N(\epsilon,\rho)\}$. Therefore, $\|x(k)\|^{2}\leq\bar{\zeta}^{N(\epsilon,\rho)-1}\frac{\lambda_{\max}(R)}{\lambda_{\min}(R)}\|x(0)\|^{2}$,
and hence, we have $\|x(k)\|\leq\sqrt{\bar{\zeta}^{N(\epsilon,\rho)-1}\frac{\lambda_{\max}(R)}{\lambda_{\min}(R)}}\|x(0)\|,$
for all $k\in\{0,1,\ldots,N(\epsilon,\rho)\}$. Furthermore, let $\delta_{2}\triangleq\epsilon\sqrt{\bar{\zeta}^{-N(\epsilon,\rho)+1}\frac{\lambda_{\min}(R)}{\lambda_{\max}(R)}}$.
Consequently, if $\|x(0)\|\leq\delta_{2}$, then $\|x(k)\|\leq\epsilon$,
$k\in\{0,1,\ldots,N(\epsilon,\rho)\}$, which implies 
\begin{eqnarray}
\mathbb{P}[\max_{k\in\{0,1,\ldots,N(\epsilon,\rho)\}}\|x(k)\|>\epsilon] & = & 0.\label{eq:epsilon-result-part2}
\end{eqnarray}
It follows from (\ref{eq:epsilon-result-part1}) and (\ref{eq:epsilon-result-part2})
that for all $\epsilon>0$, $\rho>0$, 
\begin{align}
\mathbb{P}[\sup_{k\in\mathbb{N}_{0}}\|x(k)\|>\epsilon] & =\mathbb{P}[\{\max_{k\in\{0,1,\ldots,N(\epsilon,\rho)\}}\|x(k)\|>\epsilon\}\nonumber \\
 & \quad\quad\cup\,\{\sup_{k\geq N(\epsilon,\rho)+1}\|x(k)\|>\epsilon\}]\nonumber \\
 & \leq\mathbb{P}[\max_{k\in\{0,1,\ldots,N(\epsilon,\rho)\}}\|x(k)\|>\epsilon]\nonumber \\
 & \quad+\mathbb{P}[\sup_{k\geq N(\epsilon,\rho)+1}\|x(k)\|>\epsilon]\nonumber \\
 & <\rho,
\end{align}
 whenever $\|x(0)\|<\delta\triangleq\min(\delta_{1},\delta_{2})$,
which implies almost sure stability. As a final step of proving almost
sure asymptotic stability of the zero solution, we now show (\ref{eq:definition-convergence}).
First, note that by (\ref{eq:vinequality}), we have $V(x(k+1))\leq\eta(k)V(x(0))$,
$k\in\mathbb{N}$. Now, since $\mathbb{P}[\lim_{k\to\infty}\eta(k)=0]=1$,
it follows that $\mathbb{P}[\lim_{k\to\infty}V(x(k))=0]=1$, which
implies (\ref{eq:definition-convergence}), and hence the zero solution
of the closed-loop system (\ref{eq:control-system}), (\ref{eq:control-law})
is asymptotically stable almost surely. \end{proof}

Theorem~\ref{maintheorem} provides sufficient conditions for almost
sure asymptotic stability of the closed-loop system (\ref{eq:control-system})
and (\ref{eq:control-law}). Conditions (\ref{eq:condp}) and (\ref{eq:condzeta})
of Theorem~\ref{maintheorem} indicate dependence of stabilization
performance on subsystem dynamics, mode transition probabilities,
and random mode observations. The effect of mode transitions on the
stabilization is reflected in (\ref{eq:condp}) through the limiting
distribution $\pi:\mathcal{M}\to[0,1]$ as well as $l$-step transition
probabilities $p_{i,j}^{(l)}$, $i,j\in\mathcal{M}$. Furthermore,
the effect of random mode observations is indicated in condition (\ref{eq:condp})
by $\mu:\mathbb{N}\to[0,1]$, which represents the distribution of
the lengths of intervals between consecutive mode observation instants. 

\begin{rem} \label{conservatism-remark} We investigate the stability
of the closed-loop system through the Lyapunov-like function $V(x)\triangleq x^{\mathrm{T}}Rx$
with $R=\tilde{R}^{-1}$, where $\tilde{R}$ is a positive-definite
matrix that satisfy (\ref{eq:condp}). The scalar $\zeta_{i,j}\in(0,\infty)$
in (\ref{eq:condp}) characterizes an upper bound on the growth of
the Lyapunov-like function, when the switched system evolves according
to dynamics of the $i$th subsystem and the $j$th feedback gain.
Note that if $\zeta_{i,j}\in(0,1)$ for all $i,j\in\mathcal{M}$,
it is guaranteed that the Lyapunov-like function will decrease at
each time step. However, we do not require $\zeta_{i,j}\in(0,1)$
for all $i,j\in\mathcal{M}$. There may be pairs $i,j\in\mathcal{M}$
such that $\zeta_{i,j}>1$, hence Lyapunov-like function $V(\cdot)$
may grow when $i$th subsystem and the $j$th feedback gain is active.
As long as $\zeta_{i,j}$, $i,j\in\mathcal{M}$, satisfy (\ref{eq:condzeta})
the Lyapunov-like is guaranteed to converge to zero in the long-run
(even if it may grow at certain instants). Note that even though the
conditions (\ref{eq:condp}), (\ref{eq:condzeta}) allow unstable
subsystem-feedback gain pairs, some conservativeness may still arise
due the characterization with single Lyapunov-like function. This
conservatism may be reduced with an alternative approach with multiple
Lyapunov-like functions assigned for each subsystem-feedback gain
pairs. \end{rem}

\begin{rem}In order to verify conditions (\ref{eq:condp}) and (\ref{eq:condzeta})
of Theorem~\ref{maintheorem}, we take an approach similar to the
one presented in \citeasnoun{cetinkaya2013a}. Specifically, we use
Schur complements (see \citeasnoun{bernstein2009matrix}) to transform
condition (\ref{eq:condp}) into the matrix inequalities 
\begin{align}
0\leq & \left[\begin{array}{cc}
\zeta_{i,j}\tilde{R} & \hat{A}_{i,j}^{\mathrm{T}}\\
\hat{A}_{i,j} & \tilde{R}
\end{array}\right],\quad i,j\in\mathcal{M},\label{eq:lmi}
\end{align}
 where $\hat{A}_{i,j}\triangleq(A_{i}\tilde{R}+B_{i}L_{j}),\, i,j\in\mathcal{M}$.
Note that the inequalities (\ref{eq:lmi}) are linear in $\tilde{R}$
and $L_{i}$, $i\in\mathcal{M}$. In our numerical method, we iterate
over a set of the values of $\zeta_{i,j}$, $i,j\in\mathcal{M}$,
that satisfy (\ref{eq:condzeta}) and at each iteration we look for
feasible solutions to the linear matrix inequalities (\ref{eq:lmi}).
In Section~\ref{sec:Illustrative-Numerical-Example} below, we employ
this method and find values for matrices $\tilde{R}\in\mathbb{R}^{n\times n},L_{i}\in\mathbb{R}^{m\times n},i\in\mathcal{M}$,
and scalars $\zeta_{i,j}\in(0,\infty),i,j\in\mathcal{M}$, that satisfy
(\ref{eq:condp}), (\ref{eq:condzeta}) for a given discrete-time
switched linear system. It is important to note that the scalars $\zeta_{i,j}\in(0,\infty),i,j\in\mathcal{M}$,
that satisfy (\ref{eq:condzeta}) form an unbounded set. Note that
this set is smaller than the entire nonnegative orthant in $\mathbb{R}^{M^{2}}$.
However, we still need to reduce the search space of $\zeta_{i,j},i,j\in\mathcal{M}$.
To this end, first note that it is harder to find feasible solutions
to linear matrix inequalities given by (\ref{eq:lmi}) when the scalars
$\zeta_{i,j},i,j\in\mathcal{M},$ are close to zero. Note also that
if there exist a feasible solution to (\ref{eq:lmi}) for certain
values of $\zeta_{i,j},i,j\in\mathcal{M},$ then it is guaranteed
that feasible solutions to (\ref{eq:lmi}) exist also for \emph{larger
}values of $\zeta_{i,j},i,j\in\mathcal{M}$. Therefore, we can restrict
our search space and iterate over large values of $\zeta_{i,j},i,j\in\mathcal{M},$
that satisfy (\ref{eq:condzeta}), and check feasible solutions to
(\ref{eq:lmi}). Specifically, we only iterate over $\zeta_{i,j},i,j\in\mathcal{M},$
that is close to the search space's boundary identified by $\sum_{\tau\in\mathbb{N}}\mu_{\tau}\sum_{l=1}^{\tau}\sum_{i,j\in\mathcal{M}}\pi_{i}p_{i,j}^{(l-1)}\ln\zeta_{j,i}=0$.
Now note that in order for (\ref{eq:condzeta}) to be satisfied, there
must exist at least a pair $i,j\in\mathcal{M}$ such that $\zeta_{i,j}<1$.
Since the scalar $\zeta_{i,j}$ represents the stability/instability
margin for the dynamics characterized by the $i$th subsystem and
the $j$th feedback gain, we expect $\zeta_{i,i}<1$ for stabilizable
modes $i\in\mathcal{M}$. This further reduces the search space for
our numerical method. 

\end{rem}

\begin{rem}Note that conditions (\ref{eq:condp}) and (\ref{eq:condzeta})
presented in Theorem~\ref{maintheorem} can also be used for determining
almost sure asymptotic stability of the switched stochastic control
system (\ref{eq:control-system}), (\ref{eq:control-law}) with periodically
observed mode information. The renewal process characterization presented
in this paper in fact encompasses periodic mode observations (explored
previously in \citeasnoun{cetinkayaacc2012} and \citeasnoun{cetinkaya2013a})
as a special case. Specifically, suppose that the mode observation
instants are given by $t_{i}=iT$, $i\in\mathbb{N}_{0}$, where $T\in\mathbb{N}$
denotes the mode observation period. Our present framework allows
us to characterize periodic mode observations by setting the distribution
$\mu:\mathbb{N}\to[0,1]$ such that $\mu_{T}=1$ and $\mu_{\tau}=0$,
$\tau\neq T$. Note that condition (\ref{eq:condzeta}) of Theorem~\ref{maintheorem}
for this case reduces to $\sum_{l=1}^{T}\sum_{i,j\in\mathcal{M}}\pi_{i}p_{i,j}^{(l-1)}\ln\zeta_{j,i}<0$.
Furthermore, if the controller has perfect mode information at all
time instants ($T=1$, hence $\sigma(k)=r(k)$, $k\in\mathbb{N}_{0}$),
condition (\ref{eq:condzeta}) takes even a simpler form given by
the inequality $\sum_{i\in\mathcal{M}}\pi_{i}\ln\zeta_{i,i}<0$. \end{rem} 

\begin{figure}[t]
\begin{center}\includegraphics[width=0.9\columnwidth]{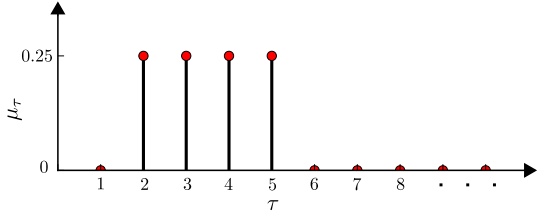}\end{center}
\vskip -5pt\protect\caption{Uniform distribution given by (\ref{eq:uniformmu}) with $\tau_{L}=2$
and $\tau_{H}=5$ for the length of intervals between consecutive
mode observation instants }
\label{Flo:muunif}
\end{figure}

\begin{rem} \label{remarkforexample2} Condition (\ref{eq:condzeta})
of Theorem~\ref{maintheorem} has a simpler form also for the case
where the length of intervals between consecutive mode observation
instants are uniformly distributed over the set $\{\tau_{\mathrm{L}},\tau_{\mathrm{L}}+1,\ldots,\tau_{\mathrm{H}}\}$
with $\tau_{\mathrm{L}},\tau_{\mathrm{H}}\in\mathbb{N}$ such that
$\tau_{\mathrm{L}}\leq\tau_{\mathrm{H}}$. In this case the distribution
$\mu:\mathbb{N}\to[0,1]$ is given by 
\begin{align}
\mu_{\tau} & \triangleq\begin{cases}
\frac{1}{\tau_{\mathrm{H}}-\tau_{\mathrm{L}}+1}, & \quad\mathrm{if}\,\,\,\tau\in\{\tau_{\mathrm{L}},\tau_{\mathrm{L}}+1,\ldots,\tau_{\mathrm{H}}\},\\
0, & \quad\mathrm{otherwise}.
\end{cases}\label{eq:uniformmu}
\end{align}
Figure~\ref{Flo:muunif} shows the distribution (\ref{eq:uniformmu})
for an example case with $\tau_{\mathrm{L}}=2$ and $\tau_{\mathrm{H}}=5$. 

With (\ref{eq:uniformmu}), condition (\ref{eq:condzeta}) of Theorem~\ref{maintheorem}
reduces to the inequality $\sum_{\tau=\tau_{\mathrm{L}}}^{\tau_{\mathrm{H}}}\sum_{l=1}^{\tau}\sum_{i,j\in\mathcal{M}}\pi_{i}p_{i,j}^{(l-1)}\ln\zeta_{j,i}<0.$ 

\end{rem}

\begin{rem} \label{remarkfortheexample} Note that our probabilistic
characterization of mode observation instants also allows us to explore
the feedback control problem under missing mode samples. Specifically,
consider the case where the mode is sampled at all time instants;
however, some of the mode samples are lost during communication between
mode sampling mechanism and the controller. Suppose that the controller
receives a sampled mode data at each time step $k\in\mathbb{N}$ with
probability $\theta\in(0,1)$. In other words, the mode data is lost
with probability $1-\theta$. We investigate this problem by setting
\begin{align}
\mu_{\tau} & \triangleq(1-\theta)^{\tau-1}\theta,\quad\tau\in\mathbb{N}.\label{eq:muformissing}
\end{align}
Figure~\ref{Flo:muwiththeta} shows the distribution (\ref{eq:muformissing})
with $\theta=0.3$. 

It turns out that for $\mu_{\tau}:\mathbb{N}\to[0,1]$ given by (\ref{eq:muformissing}),
the left-hand side of condition (\ref{eq:condzeta}) has a closed-form
expression. Note that by changing the order of summations and using
(\ref{eq:muformissing}), we can rewrite the left-hand side of (\ref{eq:condzeta})
as 
\begin{align}
 & \sum_{\tau\in\mathbb{N}}\mu_{\tau}\sum_{l=1}^{\tau}\sum_{i,j\in\mathcal{M}}\pi_{i}p_{i,j}^{(l-1)}\ln\zeta_{j,i}\nonumber \\
 & \,\,=\sum_{i,j\in\mathcal{M}}\pi_{i}(\ln\zeta_{j,i})\sum_{\tau\in\mathbb{N}}\mu_{\tau}\sum_{l=1}^{\tau}p_{i,j}^{(l-1)}\nonumber \\
 & \,\,=\sum_{i,j\in\mathcal{M}}\pi_{i}(\ln\zeta_{j,i})\sum_{l=1}^{\infty}p_{i,j}^{(l-1)}\sum_{\tau=l}^{\infty}\mu_{\tau}\nonumber \\
 & \,\,=\sum_{i,j\in\mathcal{M}}\pi_{i}(\ln\zeta_{j,i})\sum_{l=1}^{\infty}p_{i,j}^{(l-1)}(1-\sum_{\tau=1}^{l-1}\mu_{\tau})\nonumber \\
 & \,\,=\sum_{i,j\in\mathcal{M}}\pi_{i}(\ln\zeta_{j,i})\sum_{l=1}^{\infty}p_{i,j}^{(l-1)}\big(1-\sum_{\tau=1}^{l-1}(1-\theta)^{\tau-1}\theta\big).
\end{align}
Note that $\big(1-\sum_{\tau=1}^{l-1}(1-\theta)^{\tau-1}\theta\big)=\big(1-\theta\frac{1-(1-\theta)^{l-1}}{1-(1-\theta)}\big)=(1-\theta)^{l-1}$.
Therefore, 
\begin{align}
 & \sum_{\tau\in\mathbb{N}}\mu_{\tau}\sum_{l=1}^{\tau}\sum_{i,j\in\mathcal{M}}\pi_{i}p_{i,j}^{(l-1)}\ln\zeta_{j,i}\nonumber \\
 & \quad=\sum_{i,j\in\mathcal{M}}\pi_{i}(\ln\zeta_{j,i})\sum_{l=1}^{\infty}p_{i,j}^{(l-1)}(1-\theta)^{l-1}.\label{eq:beforematrixdef}
\end{align}
Let $Z\triangleq\sum_{l=1}^{\infty}P^{l-1}(1-\theta)^{l-1},$ where
$P\in\mathbb{R}^{M\times M}$ denotes the transition probability matrix
for the mode signal $\{r(k)\in\mathcal{M}\}_{k\in\mathbb{N}_{0}}$.
Note that the infinite sum in the definition of $Z$ converges, because
the eigenvalues of the matrix $(1-\theta)P$ are strictly inside the
unit circle of the complex plane. By using the formula for geometric
series of matrices \cite{bernstein2009matrix}, we obtain $Z=\big(I-(1-\theta)P)^{-1}$.
Furthermore, it follows from (\ref{eq:beforematrixdef}) that $\sum_{\tau\in\mathbb{N}}\mu_{\tau}\sum_{l=1}^{\tau}\sum_{i,j\in\mathcal{M}}\pi_{i}p_{i,j}^{(l-1)}\ln\zeta_{j,i}=\sum_{i,j\in\mathcal{M}}\pi_{i}(\ln\zeta_{j,i})z_{i,j}$,
and therefore, when $\mu:\mathbb{N}\to[0,1]$ is given by (\ref{eq:muformissing}),
condition (\ref{eq:condzeta}) takes the form $\sum_{i,j\in\mathcal{M}}\pi_{i}(\ln\zeta_{j,i})z_{i,j}<0$,
where $z_{i,j}$ is the $(i,j)$th entry of the matrix $Z$. 

\begin{figure}[t]
\begin{center}\includegraphics[width=0.9\columnwidth]{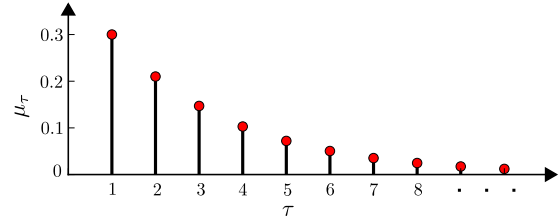}\end{center}
\vskip -5pt\protect\caption{Distribution given by (\ref{eq:muformissing}) with $\theta=0.3$
for the length of intervals between consecutive mode observation instants }
\label{Flo:muwiththeta}
\end{figure}

\end{rem}

\begin{rem}Note that in order to check condition (\ref{eq:condzeta})
of Theorem~\ref{maintheorem}, one needs to have perfect information
regarding the distribution $\mu:\mathbb{N}\to[0,1]$, according to
which the lengths of intervals between consecutive mode observation
instants are distributed. In Theorem~\ref{relaxtautheorem} below,
we present alternative sufficient stabilization conditions, which
do not require exact knowledge of $\mu:\mathbb{N}\to[0,1]$. Specifically,
we consider the case where the \emph{mode observation instants} $t_{i}$,
$i\in\mathbb{N}_{0}$, satisfy 
\begin{align}
\mathbb{P}[t_{i+1}-t_{i}\leq\bar{\tau}] & =1,\quad i\in\mathbb{N}_{0},\label{eq:newcondformondeobservationinstants}
\end{align}
 where $\bar{\tau}\in\mathbb{N}$ is a known constant. In this case
time instants of consecutive mode observations are assumed to be at
most $\bar{\tau}\in\mathbb{N}$ steps apart. In other words, if (\ref{eq:newcondformondeobservationinstants})
is satisfied, it is guaranteed that the length of intervals between
consecutive mode observation instants cannot be larger than $\bar{\tau}\in\mathbb{N}$.
It is important to note that (\ref{eq:newcondformondeobservationinstants})
characterizes a requirement on the intervals between \emph{mode observation
instants} and it is not related to mode switches. 

\begin{thm}\label{relaxtautheorem}Consider the switched linear stochastic
system (\ref{eq:control-system}). Suppose that the mode-transition
probability matrix $P\in\mathbb{R}^{M\times M}$ possesses only positive
real eigenvalues. If there exist matrices $\tilde{R}>0$, $L_{i}\in\mathbb{R}^{m\times n},\, i\in\mathcal{M}$,
and scalars $\bar{\tau}\in\mathbb{N}$, $\zeta_{i,j}\in(0,\infty)$,
$i,j\in\mathcal{M}$, such that (\ref{eq:condp}), (\ref{eq:newcondformondeobservationinstants}),
\begin{align}
 & \,\,\,\,0\leq\zeta_{j,i}-\zeta_{i,i},\quad i,j\in\mathcal{M},\label{eq:zetarelaxingcondition}\\
 & \sum_{l=1}^{\bar{\tau}}\sum_{i,j\in\mathcal{M}}\pi_{i}p_{i,j}^{(l-1)}\ln\zeta_{j,i}<0,\label{eq:newzetacond}
\end{align}
hold, then the control law (\ref{eq:control-law}) with the feedback
gain matrix (\ref{eq:controllawintheorem}) guarantees that the zero
solution $x(k)\equiv0$ of the closed-loop system is asymptotically
stable almost surely.\end{thm}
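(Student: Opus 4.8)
The plan is to reduce Theorem~\ref{relaxtautheorem} to Theorem~\ref{maintheorem} by showing that its hypotheses force condition (\ref{eq:condzeta}). First I would note that (\ref{eq:newcondformondeobservationinstants}) means $\mu_\tau=0$ for $\tau>\bar\tau$, so the interval-length distribution is supported on $\{1,\ldots,\bar\tau\}$. Writing $F(\tau)\triangleq\sum_{l=1}^{\tau}\sum_{i,j\in\mathcal{M}}\pi_i p_{i,j}^{(l-1)}\ln\zeta_{j,i}$ for the inner partial sum, the left-hand side of (\ref{eq:condzeta}) is $\sum_{\tau=1}^{\bar\tau}\mu_\tau F(\tau)$, while (\ref{eq:newzetacond}) reads exactly $F(\bar\tau)<0$. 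Since the precise weights $\mu_\tau$ are unknown and only the bound $\bar\tau$ is available, to guarantee $\sum_{\tau=1}^{\bar\tau}\mu_\tau F(\tau)<0$ for every admissible distribution it suffices to establish $F(\tau)<0$ for all $\tau\in\{1,\ldots,\bar\tau\}$; Theorem~\ref{maintheorem} then yields almost sure asymptotic stability.

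The core is therefore the comparison $F(\tau)\le\frac{\tau}{\bar\tau}F(\bar\tau)$ for $1\le\tau\le\bar\tau$, which with $F(\bar\tau)<0$ forces $F(\tau)<0$. To expose the role of the hypotheses I would split $\ln\zeta_{j,i}=\ln\zeta_{i,i}+d_{i,j}$ with $d_{i,j}\triangleq\ln\zeta_{j,i}-\ln\zeta_{i,i}$, so that (\ref{eq:zetarelaxingcondition}) gives $d_{i,j}\ge0$ and $d_{i,i}=0$. Because $\sum_{j}p_{i,j}^{(l-1)}=1$, the diagonal part separates and
\[
F(\tau)=\tau\sum_{i\in\mathcal{M}}\pi_i\ln\zeta_{i,i}+W(\tau),\qquad W(\tau)\triangleq\sum_{i,j\in\mathcal{M}}\pi_i d_{i,j}\sum_{l=0}^{\tau-1}p_{i,j}^{(l)}.
\]
Every summand of $W(\tau)$ is nonnegative, so $W$ is nondecreasing, and evaluating $F(\bar\tau)<0$ forces $\sum_i\pi_i\ln\zeta_{i,i}<0$. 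The required comparison is then equivalent to $W(\tau)/\tau\le W(\bar\tau)/\bar\tau$, that is, to the monotonicity in $\tau$ of $W(\tau)/\tau=\sum_{i,j\in\mathcal{M}}\pi_i d_{i,j}\,\bar p_{i,j}(\tau)$, where $\bar p_{i,j}(\tau)\triangleq\frac{1}{\tau}\sum_{l=0}^{\tau-1}p_{i,j}^{(l)}$ are the Ces\`aro-averaged transition probabilities.

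Establishing that $\bar p_{i,j}(\tau)$ is nondecreasing for $i\neq j$ is precisely where the assumption that $P$ has only positive real eigenvalues is indispensable, and I expect this to be the main obstacle. The idea is the spectral expansion $p_{i,j}^{(l)}=\pi_j+\sum_{k\ge2}\lambda_k^{\,l}\beta_{i,j}^{(k)}$: positivity of every $\lambda_k$ excludes the oscillatory factors $(-|\lambda_k|)^l$ that a general stochastic matrix would exhibit, so each transient mode relaxes without sign changes toward the stationary value $\pi_j$ and the running average should inherit this. Concretely, I would argue the one-step criterion $p_{i,j}^{(\tau)}\ge\bar p_{i,j}(\tau)$, which is equivalent to $\bar p_{i,j}(\tau+1)\ge\bar p_{i,j}(\tau)$, and it is the nonnegativity of the eigenvalues that should guarantee the newest term dominates the current average. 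Granting this monotonicity, the nonnegative weights $\pi_i d_{i,j}$ transfer it to $W(\tau)/\tau$, the comparison $F(\tau)\le\frac{\tau}{\bar\tau}F(\bar\tau)$ follows, hence $F(\tau)<0$ for all $\tau\le\bar\tau$, so (\ref{eq:condzeta}) holds for every $\mu$ supported on $\{1,\ldots,\bar\tau\}$ and Theorem~\ref{maintheorem} closes the argument.
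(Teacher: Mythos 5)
Your proposal is correct and follows essentially the same route as the paper's proof: both reduce the theorem to condition (\ref{eq:condzeta}) of Theorem~\ref{maintheorem} by establishing $\frac{1}{\tau}\sum_{l=1}^{\tau}\sum_{i,j}\pi_{i}p_{i,j}^{(l-1)}\ln\zeta_{j,i}\le\frac{1}{\bar{\tau}}\sum_{l=1}^{\bar{\tau}}\sum_{i,j}\pi_{i}p_{i,j}^{(l-1)}\ln\zeta_{j,i}$ for $\tau\le\bar{\tau}$, resting on the monotonicity of the Ces\`aro-averaged off-diagonal transition probabilities, which the paper likewise deduces from the non-oscillatory (monotone) convergence of $p_{i,j}^{(l)}$ under the positive-eigenvalue hypothesis. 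Your decomposition $\ln\zeta_{j,i}=\ln\zeta_{i,i}+d_{i,j}$ combined with row-stochasticity is algebraically the same bookkeeping the paper carries out with the quantities $\kappa_{\tau,\bar{\tau}}^{i,j}$ and the identity $\sum_{j\in\mathcal{M}}\kappa_{\tau,\bar{\tau}}^{i,j}=0$.
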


\begin{proof}The mode signal $\{r(k)\in\mathcal{M}\}_{k\in\mathbb{N}_{0}}$
is an irreducible and aperiodic Markov chain; therefore, the invariant
distribution $\pi:\mathcal{M}\to[0,1]$ is also the limiting distribution
\cite{norris2009}. Thus, for all $i,j\in\mathcal{M}$ and $k\in\mathbb{N}_{0}$,
\begin{align}
\lim_{l\to\infty}p_{i,j}^{(l)} & =\lim_{l\to\infty}\mathbb{P}[r(k+l)=j|r(k)=i]=\pi_{j}.
\end{align}
Now, let $p_{i}^{(l)}\in[0,1]^{1\times M}$,$i\in\mathcal{M}$, denote
the row vector with the $j$th element given by the $l$-step transition
probability $p_{i,j}^{(l)}$. Note that $p_{i}^{(\cdot)}$ is the
unique solution of the difference equation 
\begin{align}
p_{i}^{(l+1)} & =p_{i}^{(l)}P,\quad l\in\mathbb{N}_{0},\label{eq:difeq}
\end{align}
 with the initial condition $p_{i,i}^{(0)}=1$ and $p_{i,j}^{(0)}=0$,
$i\neq j$, $j\in\mathcal{M}$. Since all the eigenvalues of the mode-transition
probability matrix $P\in\mathbb{R}^{M\times M}$ are positive real
numbers, the solution $p_{i}^{(\cdot)}$ of the difference equation
(\ref{eq:difeq}) does not comprise any oscillatory components, and
$l$-step transition probabilities $p_{i,j}^{(l)}$, $i,j\in\mathcal{M}$,
converge towards their limiting values \emph{monotonically}, that
is,  
\begin{align}
p_{i,i}^{(l+1)} & \leq p_{i,i}^{(l)},\quad i\in\mathcal{M},\,\, l\in\mathbb{N}_{0},\label{eq:lstepineq1}\\
p_{i,j}^{(l+1)} & \geq p_{i,j}^{(l)},\quad i\neq j,\,\,\, i,j\in\mathcal{M},\,\,\, l\in\mathbb{N}_{0}.\label{eq:lstepineq2}
\end{align}
Now note that for all $i,j\in\mathcal{M}$, and $\tau\in\mathbb{N}$,
\begin{align}
\frac{1}{\tau}\sum_{l=1}^{\tau}p_{i,j}^{(l-1)} & =\frac{1}{\tau+1}\big(\sum_{l=1}^{\tau}p_{i,j}^{(l-1)}+\frac{1}{\tau}\sum_{l=1}^{\tau}p_{i,j}^{(l-1)}\big).\label{eq:scaledpij}
\end{align}
By (\ref{eq:lstepineq2}), we have $p_{i,j}^{(l-1)}\leq p_{i,j}^{\tau}$,
$l\in\{1,2,\ldots,\tau\}$, $i,j\in\mathcal{M}$, $i\neq j$. Hence,
it follows from (\ref{eq:scaledpij}) that 
\begin{align}
\frac{1}{\tau}\sum_{l=1}^{\tau}p_{i,j}^{(l-1)} & \leq\frac{1}{\tau+1}\big(\sum_{l=1}^{\tau}p_{i,j}^{(l-1)}+\frac{1}{\tau}\sum_{l=1}^{\tau}p_{i,j}^{(\tau)}\big)\nonumber \\
 & =\frac{1}{\tau+1}\big(\sum_{l=1}^{\tau}p_{i,j}^{(l-1)}+p_{i,j}^{(\tau)}\big)\nonumber \\
 & =\frac{1}{\tau+1}\sum_{l=1}^{\tau+1}p_{i,j}^{(l-1)},\quad\tau\in\mathbb{N},\,\,\, i\neq j.\label{eq:scaledpijineq}
\end{align}
As a consequence, for all $\tau\leq\bar{\tau}$ it follows that 
\begin{align}
\frac{1}{\tau}\sum_{l=1}^{\tau}p_{i,j}^{(l-1)} & \leq\frac{1}{\bar{\tau}}\sum_{l=1}^{\bar{\tau}}p_{i,j}^{(l-1)},\quad i\neq j,\,\,\, i,j\in\mathcal{M}.\label{eq:lsumineq2}
\end{align}
Next, we show that (\ref{eq:newcondformondeobservationinstants})--(\ref{eq:newzetacond})
together with (\ref{eq:lsumineq2}) imply (\ref{eq:condzeta}). First,
let $\kappa_{\tau,\bar{\tau}}^{i,j}\triangleq\frac{1}{\tau}\sum_{l=1}^{\tau}p_{i,j}^{(l-1)}-\frac{1}{\bar{\tau}}\sum_{l=1}^{\bar{\tau}}p_{i,j}^{(l-1)}$,
$i,j\in\mathcal{M}$. It follows that 
\begin{align}
 & \frac{1}{\tau}\sum_{l=1}^{\tau}\sum_{i,j\in\mathcal{M}}\pi_{i}p_{i,j}^{(l-1)}\ln\zeta_{j,i}\nonumber \\
 & \quad=\sum_{i,j\in\mathcal{M}}\pi_{i}\ln\zeta_{j,i}\frac{1}{\tau}\sum_{l=1}^{\tau}p_{i,j}^{(l-1)}\nonumber \\
 & \quad=\sum_{i,j\in\mathcal{M}}\pi_{i}(\ln\zeta_{j,i})\kappa_{\tau,\bar{\tau}}^{i,j}+\frac{1}{\bar{\tau}}\sum_{l=1}^{\bar{\tau}}\sum_{i,j\in\mathcal{M}}\pi_{i}p_{i,j}^{(l-1)}\ln\zeta_{j,i}\nonumber \\
 & \quad=\sum_{i\in\mathcal{M}}\pi_{i}(\ln\zeta_{i,i})\kappa_{\tau,\bar{\tau}}^{i,i}+\sum_{i\in\mathcal{M}}\sum_{j\in\mathcal{M},j\neq i}\pi_{i}(\ln\zeta_{j,i})\kappa_{\tau,\bar{\tau}}^{i,j}\nonumber \\
 & \quad\quad+\frac{1}{\bar{\tau}}\sum_{l=1}^{\bar{\tau}}\sum_{i,j\in\mathcal{M}}\pi_{i}p_{i,j}^{(l-1)}\ln\zeta_{j,i}.\label{eq:kappaineq}
\end{align}
Note that by (\ref{eq:lsumineq2}), we have $\kappa_{\tau,\bar{\tau}}^{i,j}\leq0$,
$\tau\leq\bar{\tau}$, $i\neq j$. It follows from (\ref{eq:zetarelaxingcondition})
that, for $\tau\leq\bar{\tau}$, 
\begin{align}
(\ln\zeta_{j,i})\kappa_{\tau,\bar{\tau}}^{i,j} & \leq(\ln\zeta_{i,i})\kappa_{\tau,\bar{\tau}}^{i,j},\quad i\neq j,\,\, i,j\in\mathcal{M}.\label{eq:logineq}
\end{align}
Now, since $\sum_{j\in\mathcal{M}}p_{i,j}^{(l)}=1$, $l\in\mathbb{N}_{0}$,
$i\in\mathcal{M}$, we have 
\begin{align}
\sum_{j\in\mathcal{M}}\kappa_{\tau,\bar{\tau}}^{i,j} & =\sum_{j\in\mathcal{M}}\frac{1}{\tau}\sum_{l=1}^{\tau}p_{i,j}^{(l-1)}-\sum_{j\in\mathcal{M}}\frac{1}{\bar{\tau}}\sum_{l=1}^{\bar{\tau}}p_{i,j}^{(l-1)}\nonumber \\
 & =\frac{1}{\tau}\sum_{l=1}^{\tau}\sum_{j\in\mathcal{M}}p_{i,j}^{(l-1)}-\frac{1}{\bar{\tau}}\sum_{l=1}^{\bar{\tau}}\sum_{j\in\mathcal{M}}p_{i,j}^{(l-1)}\nonumber \\
 & =\frac{\tau}{\tau}-\frac{\bar{\tau}}{\bar{\tau}}=0,\quad i\in\mathcal{M}.\label{eq:kappasum}
\end{align}
We use (\ref{eq:kappaineq})--(\ref{eq:kappasum}) to obtain 
\begin{align}
 & \frac{1}{\tau}\sum_{l=1}^{\tau}\sum_{i,j\in\mathcal{M}}\pi_{i}p_{i,j}^{(l)}\ln\zeta_{j,i}\nonumber \\
 & \quad\leq\sum_{i\in\mathcal{M}}\pi_{i}(\ln\zeta_{i,i})\kappa_{\tau,\bar{\tau}}^{i,j}+\sum_{i\in\mathcal{M}}\sum_{j\in\mathcal{M},j\neq i}\pi_{i}(\ln\zeta_{i,i})\kappa_{\tau,\bar{\tau}}^{i,j}\nonumber \\
 & \quad\quad+\sum_{i,j\in\mathcal{M}}\pi_{i}(\ln\zeta_{j,i})\frac{1}{\bar{\tau}}\sum_{l=1}^{\bar{\tau}}p_{i,j}^{(l-1)}\nonumber \\
 & \quad=\sum_{i\in\mathcal{M}}\pi_{i}(\ln\zeta_{i,i})\sum_{j\in\mathcal{M}}\kappa_{\tau,\bar{\tau}}^{i,j}\nonumber \\
 & \quad\quad+\sum_{i,j\in\mathcal{M}}\pi_{i}(\ln\zeta_{j,i})\frac{1}{\bar{\tau}}\sum_{l=1}^{\bar{\tau}}p_{i,j}^{(l-1)}\nonumber \\
 & \quad=\frac{1}{\bar{\tau}}\sum_{l=1}^{\bar{\tau}}\sum_{i,j\in\mathcal{M}}\pi_{i}p_{i,j}^{(l-1)}\ln\zeta_{j,i},\quad\tau\leq\bar{\tau}.\label{eq:zetaineq}
\end{align}
Finally, it follows from (\ref{eq:newcondformondeobservationinstants})
and (\ref{eq:zetaineq}) that 
\begin{align}
 & \sum_{\tau\in\mathbb{N}}\mu_{\tau}\sum_{l=1}^{\tau}\sum_{i,j\in\mathcal{M}}\pi_{i}p_{i,j}^{(l-1)}\ln\zeta_{j,i}\nonumber \\
 & \quad=\sum_{\tau\in\mathbb{N}}\mu_{\tau}\tau\big(\frac{1}{\tau}\sum_{l=1}^{\tau}\sum_{i,j\in\mathcal{M}}\pi_{i}p_{i,j}^{(l-1)}\ln\zeta_{j,i}\big)\nonumber \\
 & \quad\leq\sum_{\tau\in\mathbb{N}}\mu_{\tau}\tau\big(\frac{1}{\bar{\tau}}\sum_{l=1}^{\bar{\tau}}\sum_{i,j\in\mathcal{M}}\pi_{i}p_{i,j}^{(l-1)}\ln\zeta_{j,i}\big).\label{eq:musumineq}
\end{align}
Note that (\ref{eq:newzetacond}) and (\ref{eq:musumineq}) imply
(\ref{eq:condzeta}). Hence, the result follows from Theorem~\ref{maintheorem}.
\end{proof}

Conditions of Theorem~\ref{relaxtautheorem} can be utilized for
assessing stability of a switched stochastic control system, even
if exact knowledge of the distribution $\mu:\mathbb{N}\to[0,1]$ is
not available. Note that the requirement on the knowledge of $\mu:\mathbb{N}\to[0,1]$
is relaxed in Theorem~\ref{relaxtautheorem} by imposing other conditions
on the mode-transition probability matrix $P\in\mathbb{R}^{M\times M}$
and the scalars $\zeta_{i,j}\in(0,\infty)$, $i,j\in\mathcal{M}$. 

\end{rem}

\section{Illustrative Numerical Examples \label{sec:Illustrative-Numerical-Example}}

In this section we provide numerical examples to demonstrate the results
presented in this paper. 

\begin{exmp}Consider the switched stochastic system (\ref{eq:control-system})
with $M=2$ modes described by the subsystems matrices 
\begin{align*}
A_{1}=\left[\begin{array}{cc}
0 & 1\\
1.6 & -0.3
\end{array}\right] & \,,\quad A_{2}=\left[\begin{array}{cc}
0 & 1\\
-0.5 & 1.4
\end{array}\right],
\end{align*}
 $B_{1}=[0,\,1]^{\mathrm{T}}$, and $B_{2}=[0,\,-1]^{\mathrm{T}}$.
The mode signal $\{r(k)\in\mathcal{M}\triangleq\{1,2\}\}_{k\in\mathbb{N}_{0}}$
of the switched system is assumed to be an aperiodic and irreducible
Markov chain characterized by the transition probabilities $p_{1,2}=p_{2,1}=0.3$
and $p_{1,1}=p_{2,2}=0.7$. The invariant distribution for $\{r(k)\in\mathcal{M}\triangleq\{1,2\}\}_{k\in\mathbb{N}_{0}}$
is given by $\pi_{1}=\pi_{2}=0.5$. Moreover, $\mu:\mathbb{N}\to[0,1]$,
according to which the lengths of intervals between consecutive mode
observation instants are distributed, is assumed to be given by $\mu_{\tau}=(1-\theta)^{\tau-1}\theta$,
$\tau\in\mathbb{N}$, with $\theta=0.3$. In this case, at each time
step $k\in\mathbb{N}$, the mode may be observed with probability
$\theta=0.3$ (see Remark~\ref{remarkfortheexample}). 

Note that 
\begin{align}
\tilde{R} & =\left[\begin{array}{cc}
3.0143 & -0.1485\\
-0.1485 & 1.5280
\end{array}\right],
\end{align}
$L_{1}=\left[-3.5326\,\,\,0.9608\right]$, $L_{2}=\left[-3.0029\,\,\,1.8284\right]$,
and the scalars $\zeta_{1,1}=0.7$, $\zeta_{1,2}=1.8$, $\zeta_{2,1}=2$,
and $\zeta_{2,2}=0.8$ satisfy (\ref{eq:condp}) and (\ref{eq:condzeta}).
Now, it follows from Theorem~\ref{maintheorem} that the proposed
control law (\ref{eq:control-law}) with feedback gain matrices 
\begin{align}
K_{1} & =L_{1}\tilde{R}^{-1}=\left[-1.1465\,\,\,0.5174\right],\label{eq:examplek1}\\
K_{2} & =L_{2}\tilde{R}^{-1}=\left[-0.9718\,\,\,1.1021\right],\label{eq:examplek2}
\end{align}
guarantees almost sure asymptotic stability of the closed-loop switched
stochastic system (\ref{eq:control-system}), (\ref{eq:control-law}). 

Sample paths of the state $x(k)$ and the control input $u(k)$ (obtained
with initial conditions $x(0)=\left[1,\,-1\right]^{\mathrm{T}}$ and
$r(0)=1$) are shown in Figures~\ref{Flo:x1} and \ref{Flo:u1}.
Furthermore, Figure~\ref{Flo:rsn1} shows a sample path of the actual
mode signal $r(k)$ and its sampled version $\sigma(k)$. Figures~\ref{Flo:x1}--\ref{Flo:rsn1}
indicate that our proposed control framework guarantees stabilization
even for the case where operation mode of the switched system is observed
only at random time instants. 

\begin{figure}[t]
\begin{center}\includegraphics[width=0.9\columnwidth]{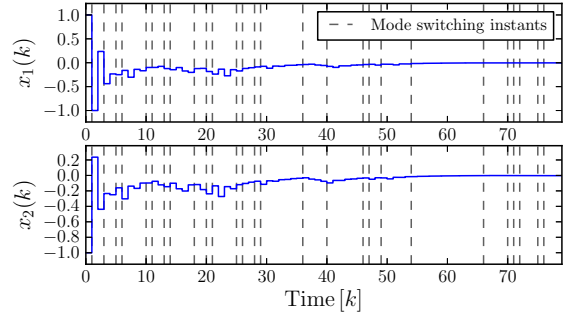}\end{center}
\vskip -5pt\protect\caption{State trajectory versus time}
\label{Flo:x1}
\end{figure}

\begin{figure}[t]
\begin{center}\includegraphics[width=0.9\columnwidth]{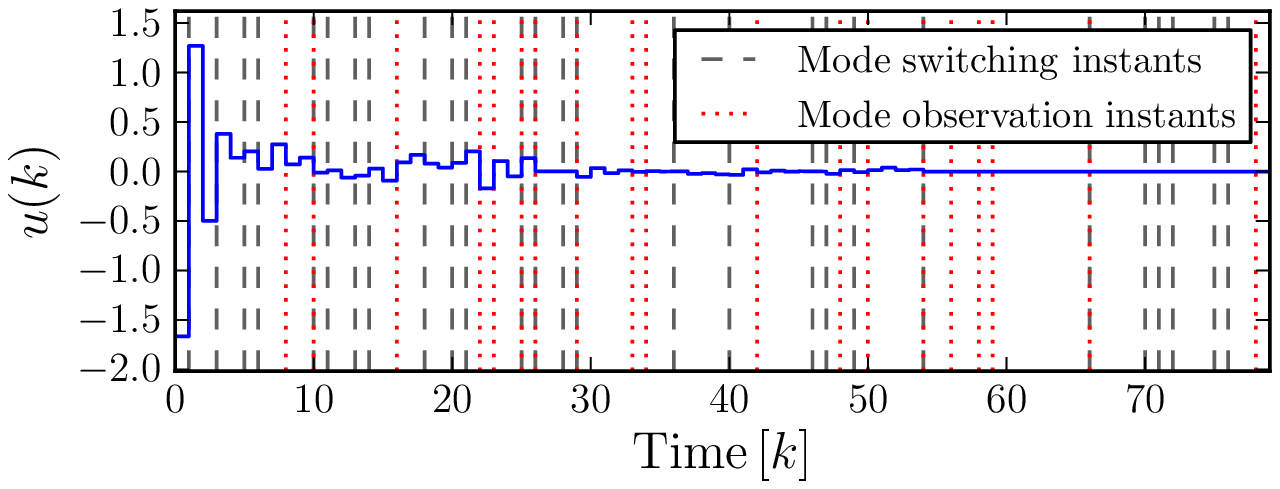}\end{center}
\vskip -5pt\protect\caption{Control input versus time }
\label{Flo:u1}
\end{figure}

\begin{figure}[t]
\begin{center}\includegraphics[width=0.87\columnwidth]{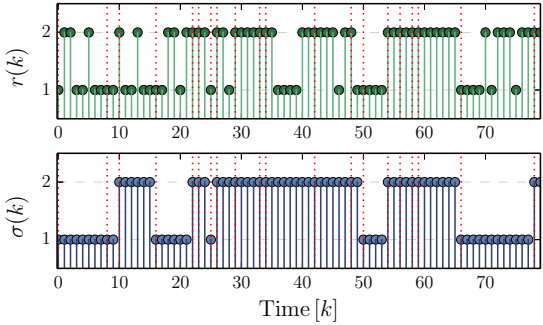}\end{center}
\vskip -5pt\protect\caption{Actual mode $r(k)$ and sampled mode $\sigma(k)$ }
\label{Flo:rsn1}
\end{figure}

The control law (\ref{eq:control-law}) with feedback gain matrices
(\ref{eq:examplek1}) and (\ref{eq:examplek2}) guarantee stabilization
of the closed-loop system with random mode observations characterized
by distribution $\mu_{\tau}=(1-\theta)^{\tau-1}\theta$ with $\theta=0.3$.
Note that for each time step, $\theta$ represents the probability
of mode information being available for control purposes. In order
to investigate conservativeness of our results, we search all values
of parameter $\theta$ for which the control law (\ref{eq:control-law})
with feedback gains (\ref{eq:examplek1}) and (\ref{eq:examplek2})
achieve stabilization. To this end, first, we search values of $\theta$
such that there exist a positive-definite matrix $\tilde{R}$, and
scalars $\zeta_{i,j}$, $i,j\in\mathcal{M}$ that satisfy conditions
(\ref{eq:condp}) and (\ref{eq:condzeta}) of Theorem~\ref{maintheorem}
with $L_{1}=K_{1}\tilde{R}$ and $L_{2}=K_{2}\tilde{R}$, where $K_{1}$
and $K_{2}$ are given by (\ref{eq:examplek1}) and (\ref{eq:examplek2}).
We find that for parameter values $\theta\in[0.2,1]$, conditions
(\ref{eq:condp}) and (\ref{eq:condzeta}) are satisfied. Hence Theorem~\ref{maintheorem}
guarantees stabilization for the case where parameter $\theta$ is
inside the range $[0.2,1]$. On the other hand, through repetitive
numerical simulations we observe that the states of the closed-loop
system converge to the origin in fact for a larger range of parameter
values ($\theta\in[0.12,1]$), which indicate some conservativeness
in the conditions of Theorem~\ref{maintheorem} (see Remark~\ref{conservatism-remark}). 

\end{exmp}

\begin{exmp}Consider the switched stochastic system (\ref{eq:control-system})
with $M=3$ modes described by the subsystems matrices 
\begin{align*}
A_{1}=\left[\begin{array}{cc}
0 & 1\\
1.5 & 0.5
\end{array}\right] & ,\,\,\, A_{2}=\left[\begin{array}{cc}
0 & 1\\
1 & 0.5
\end{array}\right],\,\,\, A_{3}=\left[\begin{array}{cc}
0 & -1\\
1.1 & 1.2
\end{array}\right],
\end{align*}
 $B_{1}=[0,\,1]^{\mathrm{T}}$, $B_{2}=[0,\,0.2]^{\mathrm{T}}$, and
$B_{3}=[0,\,0.7]^{\mathrm{T}}$. The mode signal $\{r(k)\in\mathcal{M}\triangleq\{1,2,3\}\}_{k\in\mathbb{N}_{0}}$
of the switched system is assumed to be an aperiodic and irreducible
Markov chain characterized by the transition matrix $P$ with entries
$p_{i,i}=0.6$, $i\in\mathcal{M}$, and $p_{i,j}=0.2$, $i\neq j$,
$i,j\in\mathcal{M}$. The invariant distribution for $\{r(k)\in\mathcal{M}\triangleq\{1,2,3\}\}_{k\in\mathbb{N}_{0}}$
is given by $\pi_{1}=\pi_{2}=\pi_{3}=\frac{1}{3}$. Furthermore, note
that the transition matrix $P$ possesses positive real eigenvalues
$0.4$ (with algebraic multiplicity $2$) and $1$. The lengths of
intervals between consecutive mode observation instants are assumed
to be uniformly distributed over the set $\{2,3,4,5\}$ (see Remark~\ref{remarkforexample2}).
In other words, the distribution $\mu:\mathbb{N}\to[0,1]$ is assumed
to be given by (\ref{eq:uniformmu}) with $\tau_{\mathrm{L}}=2$ and
$\tau_{\mathrm{H}}=5$. Note that for this example the mode observation
instants $t_{i}$, $i\in\mathbb{N}_{0}$, satisfy (\ref{eq:newcondformondeobservationinstants})
with $\bar{\tau}=5$. 

In this example, we will utilize Theorem~\ref{relaxtautheorem} for
the case where the upper-bounding constant $\bar{\tau}=5$ is known,
but the exact knowledge of the distribution $\mu:\mathbb{N}\to[0,1]$
is not available (see Remark~\ref{relaxtautheorem}). Specifically,
note that 
\begin{align}
\tilde{R} & =\left[\begin{array}{cc}
2.6465 & -0.7851\\
-0.7851 & 1.2568
\end{array}\right],
\end{align}
 $L_{1}=\left[-3.5858\,\,\,\,\,0.1413\right]$, $L_{2}=\left[-4.7066\,\,\,\,-0.3329\right]$,
$L_{3}=\left[-3.2532\,\,\,\,-0.3601\right]$, and the scalars $\zeta_{1,1}=0.6$,
$\zeta_{1,2}=1.7$, $\zeta_{1,3}=1.5$, $\zeta_{2,1}=1.6$, $\zeta_{2,2}=0.7$,
$\zeta_{2,3}=2$, $\zeta_{3,1}=2$, $\zeta_{3,2}=2$, and $\zeta_{3,3}=0.5$
satisfy (\ref{eq:condp}), (\ref{eq:zetarelaxingcondition}), and
(\ref{eq:newzetacond}). Therefore, it follows from Theorem~\ref{relaxtautheorem}
that the proposed control law (\ref{eq:control-law}) with feedback
gain matrices $K_{1}=L_{1}\tilde{R}^{-1}=\left[-1.6222\,\,\,-0.9009\right]$,
$K_{2}=L_{2}\tilde{R}^{-1}=\left[-2.2794\,\,\,-1.6888\right]$, $K_{3}=L_{3}\tilde{R}^{-1}=\left[-1.6132\,\,\,-1.2942\right],$
guarantees almost sure asymptotic stability of the closed-loop system
(\ref{eq:control-system}), (\ref{eq:control-law}). 

Figures~\ref{Flo:x2} and \ref{Flo:u2} respectively show sample
paths of the state $x(k)$ and the control input $u(k)$ obtained
with initial conditions $x(0)=\left[1,\,-1\right]^{\mathrm{T}}$ and
$r(0)=1$. Furthermore, a sample path of the actual mode signal $r(k)$
and its sampled version $\sigma(k)$ are shown in Figure~\ref{Flo:rsn2}.
As it is indicated in Figures~\ref{Flo:x2}--\ref{Flo:rsn2}, the
proposed control framework (\ref{eq:control-law}) achieves asymptotic
stabilization of the zero solution. It is important to note that the
feedback gains $K_{1}$, $K_{2}$, and $K_{3}$ are designed by utilizing
Theorem~\ref{relaxtautheorem} without using information on the distribution
$\mu:\mathbb{N}\to[0,1]$. Note that Theorem~\ref{relaxtautheorem}
requires only the knowledge of an upper-bounding constant $\bar{\tau}\in\mathbb{N}$
for the length of intervals between consecutive mode observation instants,
instead of the exact knowledge of $\mu:\mathbb{N}\to[0,1]$. 

\begin{figure}[t]
\begin{center}\includegraphics[width=0.9\columnwidth]{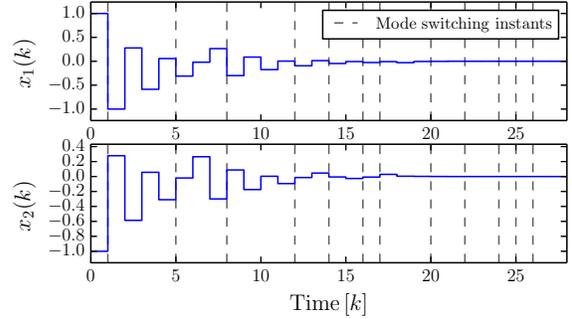}\end{center}
\vskip -7pt\protect\caption{State trajectory versus time}
\label{Flo:x2}
\end{figure}

\begin{figure}[t]
\begin{center}\includegraphics[width=0.9\columnwidth]{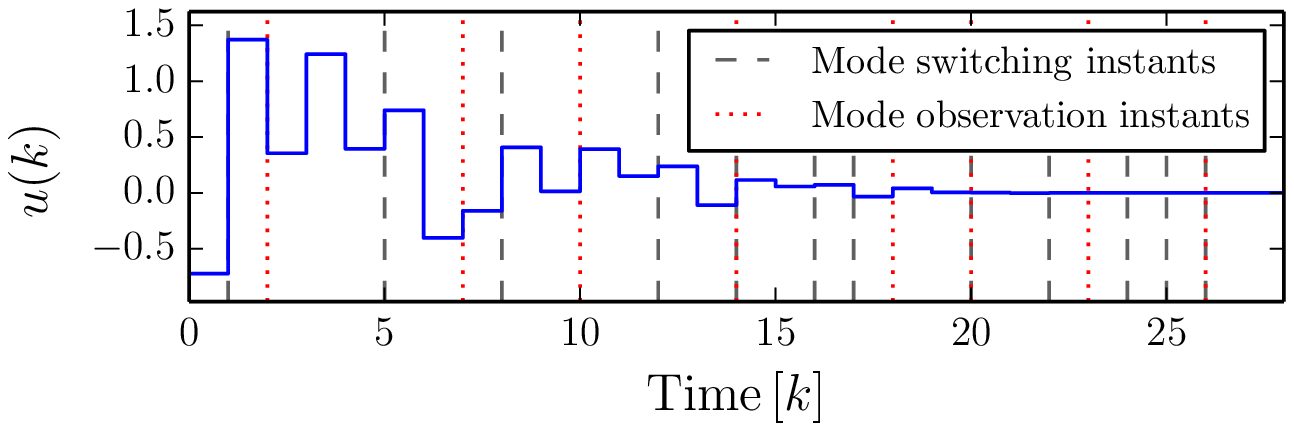}\end{center}
\vskip -7pt\protect\caption{Control input versus time }
\label{Flo:u2}
\end{figure}

\begin{figure}[t]
\begin{center}\includegraphics[width=0.85\columnwidth]{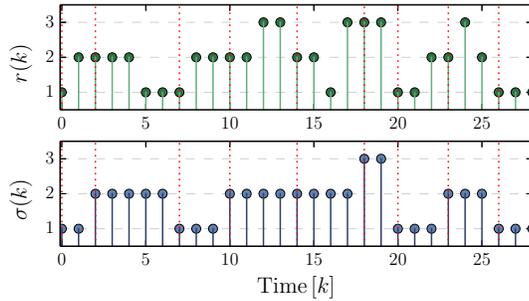}\end{center}
\vskip -7pt\protect\caption{Actual mode $r(k)$ and sampled mode $\sigma(k)$ }
\label{Flo:rsn2}
\end{figure}

\end{exmp}

\section{Conclusion \label{sec:Conclusion}}

We proposed a feedback control framework for stabilization of switched
linear stochastic systems under randomly available mode information.
In this problem setting, information on the active operation mode
of the switched system is assumed to be available for control purposes
only at random time instants. We presented a probabilistic analysis
concerning a sequence-valued stochastic process that captures the
evolution of active operation mode between mode observation instants.
We then used the results of this analysis to obtain sufficient almost
sure asymptotic stability conditions for the zero solution of the
closed-loop system. 

\balance

\bibliographystyle{automatica}
\bibliography{references}

\end{document}